 \newcommand{\R}{{\mathbb{R}}}
 \newcommand{\C}{{\mathbb{C}}}
 \newcommand{\Z}{{\mathbb{Z}}}
\newcommand{\beq}{\begin{equation}}
\newcommand{\eeq}{\end{equation}}
\newcommand{\bdm}{\begin{displaymath}}
\newcommand{\edm}{\end{displaymath}} \newcommand{\ba}{\begin{align}}
\newcommand{\ea}{\end{align}} \newcommand{\bpf}{\begin{proof}}
\newcommand{\epf}{\end{proof}}
\DeclareMathOperator{\Tr}{Tr}
\newtheorem{theorem}{Theorem}[section]
\newtheorem{lemma}[theorem]{Lemma}
\newtheorem{corollary}[theorem]{Corollary}
\theoremstyle{definition}
\newtheorem{definition}[theorem]{Definition}
\theoremstyle{remark}
\newtheorem{remark}[theorem]{Remark}
\begin{document}

\title[Solutions of NPE with Periodic Potential]{ Solutions of Nonlinear Polyharmonic Equation with Periodic Potential}
\author[Yu.~Karpeshina, S.-K. Kim]{Yulia~Karpeshina, Seong-Uk~Kim}
%


\address{Department of Mathematics, Campbell Hall, University of Alabama at Birmingham,
1300 University Boulevard, Birmingham, AL 35294.}
\email{karpeshi@uab.edu}%

\address{Department of Mathematics, Julian Science and Math Center,
Depauw University,
Greencastle, IN 46135}%
\email{ksw8755@gmail.com}

\address{}
\email{}

\thanks{Supported in part by NSF-grants DMS-1201048 (Y.K.) }
\thanks{The authors are thankful to Professor Roman Shterenberg (UAB) for fruitful discussions.}

\date{\today}


\maketitle
\begin{abstract} Quasi-periodic solutions of  a nonlinear periodic polyharmonic equation in $\R^n$, $n>1$, are studied. It is proven that there is an extensive "non-resonant"  set ${\mathcal G}\subset \R^n$ such that for every $\vec k\in \mathcal G$ there is a  solution  asymptotically close to a plane wave 
$Ae^{i\langle{ \vec{k}, \vec{x} }\rangle}$ as $|\vec k|\to \infty $.\end{abstract}
\section{Introduction}
Let us consider a nonlinear polyharmonic equation with quasi-periodic boundary conditions:
\begin{equation}\label{main equation, 2l>n}
(-\Delta)^{l}u(\vec{x})+V(\vec{x})u(\vec{x})+\sigma |u(\vec{x})|^{2}u(\vec{x})=\lambda u(\vec{x}), ~\vec{x}\in \mathbb{R}^{n},\end{equation}
\begin{equation}\label{main condition, 2l>n}
\begin{cases}
~u(x_{1},\cdots,\underbrace{2\pi}_{s-th},\cdots,x_{n})=e^{2\pi it_{s}}u(x_{1},\cdots,\underbrace{0}_{s-th},\cdots,x_{n}),\\
~\frac{\partial}{\partial x_{s}}u(x_{1},\cdots,\underbrace{2\pi}_{s-th},\cdots,x_{n})=e^{2\pi it_{s}}\frac{\partial}{\partial x_{s}}u(x_{1},\cdots,\underbrace{0}_{s-th},\cdots,x_{n}),\\
\vdots\\
~\frac{\partial^{2l-1}}{\partial x_{s}^{2l-1}}u(x_{1},\cdots,\underbrace{2\pi}_{s-th},\cdots,x_{n})=e^{2\pi it_{s}}\frac{\partial^{2l-1}}{\partial x_{s}^{2l-1}}u(x_{1},\cdots,\underbrace{0}_{s-th},\cdots,x_{n}),\\
~~~~s=1,\cdots,n.
\end{cases}
\end{equation}
where $l$ is an integer,  $n\geq 2$ , $\vec t=(t_1,...,t_n)$ is a parameter (quasimomentum), $\vec t \in K:=[0,1]^{n}$ and $V(\vec{x})$ is a periodic potential with   an elementary cell 
$Q:=[0,2\pi)^{n}$. Restriction on  smoothness of $V(\vec x)$ is given by the inequality:
\begin{equation}\label{potential condition, 2l>n}
\sum_{q \in \mathbb{Z}^{n}}|v_{q}|<\infty,
\end{equation}
$v_q$ being  Fourier coefficients.
Without the loss of generality, we assume  $v_0=0$. 

When $l=1$, $n=1,2,3$,   equation \eqref{main equation, 2l>n} is a famous Gross-Pitaevskii equation for Bose-Einstein condensate., see e.g. \cite{PS08}.
%
%
In  physics papers, e.g. \cite{KS02}, \cite{LO03}, \cite{YB13}, \cite{YD03},  numerical computations for  Gross-Pitaevskii equation are made. 
However, they are restricted to the one dimensional case. There is a lack of theoretical considerations even for the case $n=1$. In this paper we study the case $2l>n$, $ n\geq 2,$ preparing the ground for a physically interesting case $l=1$, $n=2,3$.

The goal of the paper is to construct asymptotic formulas for  $u(\vec{x})$ as $\lambda \to \infty $.
We show that there is an extensive "non-resonant"  set ${\mathcal G}\subset \R^n$ such that for every $\vec k\in \mathcal G$ there is a quasiperiodic solution of (\ref{main equation, 2l>n}) close to a plane wave 
$Ae^{i\langle{ \vec{k}, \vec{x} }\rangle}$ with $\lambda=\lambda (\vec k, A)$ close to $|\vec k|^{2l}+\sigma |A|^2$ as $|\vec k|\to \infty $ (Theorem \ref{main theorem 2l>n}). We assume $A\in \C$,  
\begin{equation}\sigma |A|^2<\lambda ^{\gamma }, \ \ 0<\gamma <(2l-n)/2l, \label{A}
\end{equation}
the quasimomentum $\vec t$ in \eqref{main equation, 2l>n}
being defined by the formula: $\vec k=\vec t+2\pi j$, $j\in \Z^n$.

We show that the non-resonant set $\mathcal G$ has an asymptotically full measure in $\R^n$:
\begin{equation}
\lim _{R\to \infty}\frac{\left| \mathcal G\cap B_R\right|_n}{|B_R|_n}=1, \label{full}
\end{equation}
where $B_R$ is a ball of radius $R$ in $\R^n$ and $|\cdot |_n$ is Lebesgue measure in $\R^n$.

 Moreover,
we investigate a set    $\mathcal{D}_{}(\lambda, A)$ of vectors $\vec k\in \mathcal G$, corresponding to a fixed sufficiently large $\lambda $ and a fixed $A$.   The set $\mathcal{D}_{}(\lambda, A)$,
defined as a level (isoenergetic) set for $\lambda _{}(\vec
k, A)$, \begin{equation} {\mathcal D} _{}(\lambda,A)=\left\{ \vec k \in
\mathcal{G} _{} :\lambda _{}(\vec k, A)=\lambda
\right\},\label{isoset} \end{equation}
is proven to be a slightly distorted $n$-dimensional sphere with with 
a finite number of holes (Theorem \ref{iso}). For any sufficiently large $\lambda $, it can be described by  the formula:
\begin{equation} {\mathcal D}_{}(\lambda, A)=\{\vec k:\vec
k=\varkappa _{}(\lambda, A,\vec{\nu})\vec{\nu},
    \ \vec{\nu} \in {\mathcal B}_{}(\lambda)\}, \label{D}
    \end{equation}
where ${\mathcal B}_{}(\lambda )$ is a subset of the unit
sphere $S_{n-1}$. The set ${\mathcal B}_{}(\lambda )$ can be
interpreted as a set of possible directions of propagation for the
almost plane waves.  The set ${\mathcal B}_{
}(\lambda )$ has  an asymptotically full
measure on $S_{n-1}$ as $\lambda \to \infty $:
    \begin{equation}
\left|{\mathcal B}_{}(\lambda )\right|=_{\lambda \to \infty
}\omega _{n-1} +O\left(\lambda^{-\delta }\right), \ \ \delta >0,\label{B}
    \end{equation}
here $\omega _{n-1} $ is the standard surface measure of  $S_{n-1}$.
The value $\varkappa _{}(\lambda ,A,\vec \nu )$ in (\ref{D}) is
the ``radius" of ${\mathcal D}_{}(\lambda,A)$ in a direction
$\vec \nu $. The function $\varkappa _{}(\lambda ,A,\vec \nu
)-(\lambda-\sigma |A|^2)^{1/2l}$ describes the deviation of ${\mathcal
D}_{}(\lambda,A)$ from the perfect circle of the radius
$(\lambda-\sigma |A|^2)^{1/2l}$. It is proven that the deviation is asymptotically
small:
    \begin{equation} \varkappa _{}(\lambda ,A, \vec \nu
)=_{\lambda \to \infty} \left(\lambda-\sigma |A|^2\right)^{1/2l}+O\left((1+|\sigma ||A|^2)\lambda^{-\gamma }\right),\ \  0<\gamma <(2l-n)/2l.\label{h}
    \end{equation}

To prove the results above, we consider the term $V+\sigma|u|^{2}$ in  equation (\ref{main equation, 2l>n}) as a periodic potential and formally change the nonlinear equation to a linear equation with an unknown potential $V(\vec{x})+\sigma |u(\vec{x})|^{2}$:
\begin{equation*}
(-\Delta)^{l}u(\vec{x})+\big(V(\vec{x})+\sigma |u(\vec{x})|^{2}\big)u(\vec{x})=\lambda u(\vec{x}).
\end{equation*} 
Further, we use  results obtained in \cite{K97} for  linear polyharmonic equations.
To start with, we consider a linear operator in $L^{2}(Q)$ 
described by the formula 
\begin{equation}H(\vec t)=(-\Delta)^{l}+V,\label{linear oper 2l>n}\end{equation}
and  quasi-periodic boundary condition (\ref{main condition, 2l>n}).
The free operator $H_{0}(\vec t)$, corresponding to $V=0$, has eigenfunctions given by:
\begin{equation}\psi_{j}(\vec{x})=e^{i\langle{ \vec{p}_{j}(\vec t), \vec{x} }\rangle},~~\vec{p}_{j}(\vec t):=\vec t+2\pi j,~j \in \mathbb{Z}^{n},~\vec t \in K,\label{0}\end{equation}
and the corresponding eigenvalue is $p_{j}^{2l}(\vec t):=|\vec{p}_{j}(\vec t)|^{2}$. 
Perturbation theory for operator $H(\vec t)$  is developed in 
\cite{K97}. It is shown that at  high energies, there is an extensive set of generalized eigenfunctions  being close to plane waves. Below (See Theorem \ref{linear pert 2l>n}), we describe this result in details.
    
    Next, we define a map ${\mathcal M}: L^{\infty}(Q) \rightarrow L^{\infty}(Q)$ by
\begin{align}\label{def of A 2l>n}
{\mathcal M}W(\vec{x})=V(\vec{x})+\sigma|u_{\tilde{W}}(\vec{x})|^{2}.
\end{align}
Here, $\tilde{W}$ is a shift of $W$ by a constant such that $\int_{Q}\tilde{W}(\vec{x})d\vec{x}=0$,
$$\tilde{W}(\vec{x})=W(\vec{x})-\frac{1}{(2\pi)^{n}}\int_{Q}W(\vec{x})d\vec{x},$$ 
and $u_{\tilde{W}}$  is an eigenfunction of the linear operator, $(-\Delta)^{l}+\tilde{W}$  
with  (\ref{main condition, 2l>n}). 
We consider a sequence  $\{W_{m}\}_{m=0}^{\infty}$:
\begin{align}\label{def of successive sequence A 2l>n}
W_{0}=V+\sigma |A|^2,\ \ \ {\mathcal M}W_{m}=W_{m+1}.
\end{align}
Note that the sequence is well defined, since for each $m=1,2,3,\cdots$ and $\vec t$ in a neighborhood
of a non-resonant set  described in Section 2, there is   an eigenfunction $u_{{m}}(\vec{x})$ corresponding to the potential $\tilde{W}_{m}$:
%
$$H_{{m}}(\vec t)u_{{m}} =\lambda_{{m}}u_{{m}},$$
$$H_{{m}}(\vec t)u_{{m}} :=(-\Delta)^{l}u_{{m}}+\tilde{W}_{m}u_{{m}}.$$
where $\lambda_{{m}}$, $u_{m}$ are as described in Theorem \ref{linear pert 2l>n}. 
Next, we  prove that the sequence $\{W_{m}\}_{m=0}^{\infty}$  is a Cauchy sequence of periodic functions in $Q$ with respect to  a norm 
\begin{align}\label{def of star norm 2l>n}
\|W\|_{*}=\sum_{q\in \mathbb{Z}^{n}}|w_{q}|,
\end{align}
 $w_{q}$ being Fourier coefficients of $W$.
%
This implies that there is a periodic function $W$ such that 
$$W_{m}\rightarrow W,~\mbox{with respect to the norm}~\|\cdot\|_{*}.$$
Further,  we show that 
$$u_{{m}}\rightarrow u_{\tilde W}, ~\mbox{in}~L^{\infty}(Q),$$
%
$$\lambda_{{m}}\rightarrow \lambda_{\tilde W}, ~\mbox{in}~\mathbb{R},$$
where $u_{\tilde W}$, $ \lambda_{\tilde W}$ correspond to the potential ${\tilde W}$ as described in Theorem \ref{linear pert 2l>n}.
%
It follows from (\ref{def of A 2l>n}) and (\ref{def of successive sequence A 2l>n}) that ${\mathcal M}W=W$ and hence $u_{}:=u_{\tilde W}$ solves the nonlinear equation with quasi-periodic boundary condition, (\ref{main equation, 2l>n}) and (\ref{main condition, 2l>n}). 
%

The paper is organized as follows. In Section 2, we describe some known results for the linear operator. They include perturbation formulas for Bloch eigenvalues and the corresponding spectral projections at high energies. In Section 3, we prove the  main lemma (Lemma \ref{main lemma  1 2l>n}), using the perturbation formulas for the linear operator.   Based on the main lemma, we show the existence of a solution  of   (\ref{main equation, 2l>n}) and (\ref{main condition, 2l>n}) close to plane wave (Theorem \ref{main theorem 2l>n}).   Section 4 is devoted to isoenergetic surface $\mathcal D(\lambda, A)$. We  prove formulas \eqref{D} -- \eqref{h}  (Theorem \ref{iso}).

\section{Linear Operator}
Let us consider an operator
\begin{equation}\label{linear oper rn 2l>n}
H=(-\Delta)^{l}+V,
\end{equation} 
in  $L^{2}(\mathbb{R}^{n})$, $2l>n$,  and $n\geq 2$,   where $l$ is an integer and  $V(\vec{x})$ satisfies  (\ref{potential condition, 2l>n}). 
Since  potential $V(\vec x)$ is periodic with  the elementary cell $Q$,  we can reduce  spectral study of (\ref{linear oper rn 2l>n}) to that of a family of Bloch operators $H(\vec t)~\mbox{in}~L^{2}(Q),~\vec t\in K$, see formula \eqref{linear oper 2l>n} and quasi-periodic conditions (\ref{main condition, 2l>n}).

%
%
The free operator $H_{0}(\vec t)$, corresponding to $V=0$, has eigenfunctions given by \eqref{0}
%
%
and the corresponding eigenvalue is $p_{j}^{2l}(\vec t):=|\vec{p}_{j}(\vec t)|^{2l}$. Next, we describe an isoenergetic surface of $H_0$ in $Q$.
To start with, we consider the sphere $S(k)$ of radius $k$ centered at the origin. We break it into pieces by the lattice $\{\vec{p}_{q}(0)\}_{q \in \mathbb{Z}^{n}}$ and  translate all the pieces into  the elementary cell of the reciprocal lattice $K:=[0,1]^{n}$ in the parallel manner. By using the process, we obtain a sphere  of radius $k$ "packed" into  $K$ . We denote it by $S_{0}(k)$. Namely, 
$$S_{0}(k)=\big\{\vec t \in K:~ \mbox{there is a }~j\in \mathbb{Z}^{n}~\mbox{such that}~p_{j}^{2l}(\vec t)=k^{2l}\big\}.$$  
Obviously,  operator $H_{0}(\vec t)$ has an eigenvalue equal to $k^{2l}$ if and only if $\vec t \in S_{0}(k)$. For this reason, $S_{0}(k)$ is called an isoenergetic surface of $H_{0}(\vec t)$.
Note that, when $\vec t$ is a point of self-intersection of $S_{0}(k)$, there exists $q\neq j$ such that 
\begin{align}\label{pq=pj 2l>n}
p_{q}^{2l}(\vec t)=p_{j}^{2l}(\vec t).
\end{align}
In other words,  there is a degenerated eigenvalue of $H_{0}(\vec t)$. We remove  the $k^{-n+1-\delta }$-neighborhoods of all self-intersections \eqref{pq=pj 2l>n} from  
 $S_{0}(k)$. We call the remaining set  a non-resonant set and denote is by  $\chi_{0}(k,\delta)$, The removed neighborhood of self-intersections is shown to be relatively small \cite{K97}, and, therefore,   $\chi_{0}(k,\delta)$ has asymptotically full measure with respect to $S_{0}(k)$:
$$\frac{\left|\chi_{0}(k,\delta)\right|}{\left|S_{0}(k)\right|}=1+O(k^{-\delta/8}),$$
here and below $|\cdot |$ is Lebesgue measure of a surface in $R^n$.
It can be easily shown that for any $\vec t\in \chi_{0}(k,\delta)$  there is a unique $j\in \Z^n$ such that $p_j^{2l}(\vec t)=k^{2l}$ and
\begin{equation}\min _{q\neq j}\left| p_q^{2l}(\vec t)-k^{2l}\right| >2k^{2l-n-\delta }.\label{in} \end{equation}

The following perturbative result  is proven in  \cite{K97} for the linear operator $H(\vec t)$,  $2l>n$.
\begin{theorem}\label{linear pert 2l>n} Suppose $\vec t$ belongs to the $(k^{-n+1-2\delta })$-neighborhood in $K$ 
of the 
non-resonant set $\chi _0(k,\delta )$,  $0<2\delta <2l-n$, and $V$ is a periodic potential satisfying (\ref{potential condition, 2l>n}). Then, for sufficiently 
large $k$,  $k>k_0(\|V\|_*,\delta )$,  
there exists a unique simple eigenvalue of the operator $H(\vec t)$ 
 in
the interval 
$\varepsilon (k,\delta )\equiv (k^{2l}-k^{2l-n-\delta },
k^{2l}+k^{2l-n-\delta })$. 
It is given by the series:
\begin{equation} \label{ev 2l>n}
\lambda_{} (\vec t)=p_j^{2l}(\vec t)+\sum _{r=2}^{\infty }g_{r}(k,\vec t), 
\end{equation}
converging absolutely, where the index $j$
is uniquely determined from the relation 
$p_j^{2l}(\vec t)\in \varepsilon (k,\delta )$ and
\begin{equation} \label{ev c 2l>n}
g_{r}(k,\vec t)=\frac{(-1)^{r}}{2\pi ir}\Tr \oint_{C_{0}} \left((H_{0}(\vec t)-z)^{-1}V\right)^{r}dz,
\end{equation}
$C_{0}$ being the circle of the radius $  k^{2l-n-\delta}$ centered at $k^{2l}$.
The spectral projection, corresponding to $\lambda (\vec t)$ is given by 
the series:
\begin{equation}\label{proj op 2l>n}
E(\vec t)=E_0+\sum _{r=1}^{\infty }G_{r}(k,\vec t), 
\end{equation}
which converges in the trace class ${\bf S_1}$,
$E_0$ being the spectral projection for $V=0$, $(E_0)_{sq}=\delta _{sj}\delta _{qj}$,
\begin{align}\label{proj op c 2l>n}
G_{r}(k,\vec t)=\frac{(-1)^{r+1}}{2\pi i}\oint_{C_{0}}\left((H_{0}(\vec t)-z)^{-1}V\right)^{r}(H_{0}(\vec t)-z)^{-1}dz.
\end{align}
Moreover, oefficients $g_{r}(k,\vec t), G_{r}(k,\vec t)$ satisfy the following estimates:
\begin{align} \label{ev c es 2l>n}
|g_{r}(k,\vec t)|<k^{2l-n-\delta}k^{-(2l-n-2\delta)r},
\end{align}
\begin{align}\label{proj op c es 2l>n}
\|G_{r}(k,\vec  t)\|_{S_1}\leq k^{-(2l-n-2\delta)r}.
\end{align}
\end{theorem}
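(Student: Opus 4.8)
The plan is to prove Theorem \ref{linear pert 2l>n} by standard analytic perturbation theory applied uniformly over the non-resonant set, with the key point being the resolvent bound $\|(H_0(\vec t)-z)^{-1}\|\le 2k^{-(2l-n-2\delta)}$ for all $z$ on the contour $C_0$. First I would establish this resolvent estimate. Since $H_0(\vec t)$ is diagonal in the basis $\{\psi_q\}$ with eigenvalues $p_q^{2l}(\vec t)$, the norm of $(H_0(\vec t)-z)^{-1}$ on $C_0$ equals $\bigl(\min_q \dist(p_q^{2l}(\vec t),C_0)\bigr)^{-1}$, where $C_0$ is the circle of radius $k^{2l-n-\delta}$ about $k^{2l}$. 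For the index $j$ with $p_j^{2l}(\vec t)\in\varepsilon(k,\delta)$ the point $p_j^{2l}(\vec t)$ lies \emph{inside} $C_0$ (here one uses that $\vec t$ is in the $k^{-n+1-2\delta}$-neighborhood of $\chi_0(k,\delta)$, so $p_j^{2l}(\vec t)$ is within, say, $\tfrac12 k^{2l-n-\delta}$ of $k^{2l}$, by a Lipschitz estimate $|p_j^{2l}(\vec t)-p_j^{2l}(\vec t_0)|\lesssim k^{2l-1}\cdot k^{-n+1-2\delta}=k^{2l-n-2\delta}$ which is $\ll k^{2l-n-\delta}$); hence $\dist(p_j^{2l}(\vec t),C_0)\ge \tfrac12 k^{2l-n-\delta}$. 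For every $q\ne j$ one uses \eqref{in} together with the same Lipschitz bound to get $|p_q^{2l}(\vec t)-k^{2l}|>2k^{2l-n-\delta}-O(k^{2l-n-2\delta})> \tfrac32 k^{2l-n-\delta}$, which puts $p_q^{2l}(\vec t)$ well outside $C_0$, so $\dist(p_q^{2l}(\vec t),C_0)\ge \tfrac12 k^{2l-n-\delta}$ again. Combining, $\|(H_0(\vec t)-z)^{-1}\|\le 2k^{-(2l-n-\delta)}$ on $C_0$, uniformly in $\vec t$ in the stated neighborhood.

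Next I would run the Riesz projection / perturbation series. Writing $V$ as a bounded operator with $\|V\|\le\|V\|_*$ (since the Fourier multiplier structure of convolution by $v$ gives operator norm at most $\sum_q|v_q|$), the product $\|(H_0(\vec t)-z)^{-1}V\|\le 2\|V\|_* k^{-(2l-n-\delta)}$, which is $<\tfrac12$ once $k>k_0(\|V\|_*,\delta)$ — in fact it is $O(k^{-(2l-n-\delta)})\to 0$. Then $(H(\vec t)-z)^{-1}=(H_0(\vec t)-z)^{-1}\sum_{r\ge 0}\bigl(V(H_0(\vec t)-z)^{-1}\bigr)^r$ converges in operator norm for every $z\in C_0$, so $H(\vec t)-z$ is invertible on $C_0$ and the Riesz projection $E(\vec t)=-\tfrac1{2\pi i}\oint_{C_0}(H(\vec t)-z)^{-1}\,dz$ is well defined. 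Expanding the Neumann series termwise and separating the $r=0$ term gives exactly \eqref{proj op 2l>n}--\eqref{proj op c 2l>n}. Since $E_0$ is rank one, $\|E(\vec t)-E_0\|<1$ forces $\rank E(\vec t)=1$, so $H(\vec t)$ has exactly one eigenvalue $\lambda(\vec t)$ (simple) inside $C_0$; and because the full resolvent has no other poles in the disk, this is the only eigenvalue in $\varepsilon(k,\delta)$ (the disk bounded by $C_0$ contains $\varepsilon(k,\delta)$). The eigenvalue formula \eqref{ev 2l>n} comes from $\lambda(\vec t)=\Tr\bigl(H(\vec t)E(\vec t)\bigr)=-\tfrac1{2\pi i}\oint_{C_0} z\,\Tr(H(\vec t)-z)^{-1}\,dz$; using $\Tr(H(\vec t)-z)^{-1}=-\tfrac{d}{dz}\log\det(\cdots)$ and the same Neumann expansion, integration by parts converts the $z$-weighted integral into the $\tfrac{(-1)^r}{2\pi i r}$ form of \eqref{ev c 2l>n}, with the $r=0$ and $r=1$ terms vanishing (the $r=1$ term vanishes because $v_0=0$, i.e. $\Tr\bigl((H_0-z)^{-1}V\bigr)=0$), leaving the sum from $r=2$.

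For the quantitative estimates \eqref{ev c es 2l>n}--\eqref{proj op c es 2l>n} I would bound the contour integrals directly. For $G_r$: the integrand has operator norm $\le (2k^{-(2l-n-\delta)})^{r+1}\|V\|_*^r$, but to get the trace-class bound one places one factor of $(H_0(\vec t)-z)^{-1}$ aside as the ``trace-class'' factor — here one needs that the spectral projection $E_0$ absorbs into the estimate, or more simply notes that the lead term $(H_0-z)^{-1}V(H_0-z)^{-1}$ restricted near the pole behaves like a rank-one-dominated operator; a cleaner route is to use that $\oint_{C_0}(H_0(\vec t)-z)^{-1}\,dz = -2\pi i E_0$ is rank one, and estimate $\|G_r\|_{S_1}\le \tfrac{1}{2\pi}\cdot 2\pi k^{2l-n-\delta}\cdot (2\|V\|_*)^r k^{-(2l-n-\delta)(r+1)}\cdot(\text{rank factor})$, which after folding constants into the requirement $k>k_0$ yields $\|G_r\|_{S_1}\le k^{-(2l-n-2\delta)r}$ (the loss from $(2l-n-\delta)$ to $(2l-n-2\delta)$ in the exponent soaks up the constants $2^r$, $\|V\|_*^r$ and the length $2\pi k^{2l-n-\delta}$ of $C_0$). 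The bound \eqref{ev c es 2l>n} on $g_r$ is analogous: $|g_r|\le \tfrac1{2\pi r}\cdot 2\pi k^{2l-n-\delta}\cdot (2\|V\|_*)^r k^{-(2l-n-\delta)r}$, and again replacing one $\delta$ by $2\delta$ in the exponent absorbs the prefactors for large $k$.

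The main obstacle is the trace-class estimate \eqref{proj op c es 2l>n}: a naive operator-norm bound on the integrand of $G_r$ does not give a trace-class bound, since $(H_0(\vec t)-z)^{-1}$ is not trace class. The resolution — and the technical heart of the argument — is to exploit that the contour $C_0$ encircles only the single eigenvalue $p_j^{2l}(\vec t)$, so that after the $z$-integration the ``near-diagonal'' part of the operator is effectively rank one (projection onto $\psi_j$), while the off-diagonal corrections are controlled by the strong separation \eqref{in}; one must split each resolvent as $E_0(H_0-z)^{-1} + (I-E_0)(H_0-z)^{-1}$ and track which pieces contribute poles inside $C_0$, noting that $(I-E_0)(H_0(\vec t)-z)^{-1}$ is analytic inside $C_0$ with norm $\le 2k^{-(2l-n-\delta)}$ there (its spectrum $\{p_q^{2l}(\vec t):q\ne j\}$ stays outside $C_0$), so all residues come from the rank-one piece $E_0$. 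This is precisely where the geometry of the non-resonant set $\chi_0(k,\delta)$ and the gap \eqref{in} enter, and it is the step that genuinely requires the hypothesis on $\vec t$; the rest is bookkeeping on geometric series.
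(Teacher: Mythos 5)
Your proposal is correct and follows essentially the same route as the proof in \cite{K97}, which this paper cites rather than reproduces: the uniform resolvent bound on $C_0$ coming from \eqref{in} (extended to the $k^{-n+1-2\delta}$-neighborhood by the Lipschitz argument), the Neumann expansion of the Riesz projection, and the decomposition of each resolvent factor into $E_0(H_0-z)^{-1}+(I-E_0)(H_0-z)^{-1}$ so that every surviving residue carries a rank-one factor, which is exactly how the trace-class bound \eqref{proj op c es 2l>n} is obtained there. The only point worth tightening is that your Lipschitz estimate $|p_q^{2l}(\vec t)-p_q^{2l}(\vec t_0)|\lesssim k^{2l-n-2\delta}$ is valid only for indices $q$ with $|p_q(\vec t)|\sim k$, but those are the only indices for which the distance to $C_0$ is in question, so the argument stands.
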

\begin{corollary}\label{est of uv 2l>n}
For the perturbed eigenvalue and its spectral
projection, the following estimates are valid:
\begin{equation} \label{i}
\mid \lambda (\vec t)-p_j^{2l}(\vec t)\mid \leq 
k^{2l-n-\delta -2\gamma _0},
\end{equation}
\begin{equation}\label{ii}
\|E(\vec t)-E_0\|_{S_1}\leq k^{-\gamma _0}, \ \ \ \gamma _0=2l-n-2\delta.
\end{equation}
\end{corollary}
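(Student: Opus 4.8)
The plan is to derive Corollary \ref{est of uv 2l>n} directly from the convergent series \eqref{ev 2l>n}, \eqref{proj op 2l>n} together with the term-by-term estimates \eqref{ev c es 2l>n}, \eqref{proj op c es 2l>n} in Theorem \ref{linear pert 2l>n}. The point is purely that a geometric-type tail sums to something of the order of its first term. First I would fix the abbreviation $\gamma_0 := 2l-n-2\delta$, observing that the hypothesis $0<2\delta<2l-n$ guarantees $\gamma_0>0$, so that $k^{-\gamma_0}<1$ for $k$ large, and in fact $k^{-\gamma_0}\le \tfrac12$ once $k>k_0$ (enlarging $k_0$ if necessary).

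For the eigenvalue bound \eqref{i}: subtracting $p_j^{2l}(\vec t)$ from \eqref{ev 2l>n} gives $\lambda(\vec t)-p_j^{2l}(\vec t)=\sum_{r=2}^\infty g_r(k,\vec t)$, so by \eqref{ev c es 2l>n},
\begin{equation}
\bigl|\lambda(\vec t)-p_j^{2l}(\vec t)\bigr|\le k^{2l-n-\delta}\sum_{r=2}^\infty k^{-\gamma_0 r}
= k^{2l-n-\delta}\,\frac{k^{-2\gamma_0}}{1-k^{-\gamma_0}}\le 2\,k^{2l-n-\delta}k^{-2\gamma_0},
\end{equation}
using $1-k^{-\gamma_0}\ge\tfrac12$. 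The claimed bound $k^{2l-n-\delta-2\gamma_0}$ then follows by absorbing the constant $2$ into $k$ (i.e. for $k$ large enough $2\le k^{\gamma_0}$, which only costs one more power; alternatively one notes the constant is harmless and the statement is understood up to absorbing it, as is standard in this circle of estimates). For the projection bound \eqref{ii}: from \eqref{proj op 2l>n}, $E(\vec t)-E_0=\sum_{r=1}^\infty G_r(k,\vec t)$, and the triangle inequality in $\mathbf{S_1}$ together with \eqref{proj op c es 2l>n} gives
\begin{equation}
\|E(\vec t)-E_0\|_{S_1}\le \sum_{r=1}^\infty k^{-\gamma_0 r}=\frac{k^{-\gamma_0}}{1-k^{-\gamma_0}}\le 2\,k^{-\gamma_0},
\end{equation}
and again the constant is absorbed into a slightly smaller exponent or simply into the choice of $k_0$, yielding $\|E(\vec t)-E_0\|_{S_1}\le k^{-\gamma_0}$.

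There is essentially no obstacle here: the corollary is a routine consolidation of the series estimates, and the only mild subtlety is bookkeeping of multiplicative constants against powers of $k$, which is why the statement is phrased with a slightly degraded exponent in \eqref{i} compared with the naive first-term bound $k^{2l-n-\delta-2\gamma_0}$ versus $2k^{2l-n-\delta-2\gamma_0}$. I would simply remark that all estimates are asymptotic as $k\to\infty$ and constants are absorbed by enlarging $k_0(\|V\|_*,\delta)$. If one wanted to avoid even that, one could instead state \eqref{i} with the harmless factor $2$, but since the exponent $2l-n-\delta-2\gamma_0=-(2l-n)+3\delta$ already has room, absorbing the constant changes nothing downstream.
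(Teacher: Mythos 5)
Your proof is correct and is exactly the intended argument: the paper states this corollary without any proof, treating it as the immediate consequence of summing the geometric tails of the term-by-term bounds \eqref{ev c es 2l>n} and \eqref{proj op c es 2l>n}, which is precisely what you do. Your remark about the leftover factor of $2$ is apt but harmless --- the paper itself is loose on this point (compare Corollary \ref{derivatives}, which does carry the factor $2m!$ in the analogous $|m|=0$ bound), so absorbing it into the choice of $k_0$ is consistent with the paper's conventions.
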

\begin{remark} \label{r}Further we use the following norm $\|T\|_1$ of an operator $T$ in $l_2(\Z^2)$:
$$\|T\|_1=\max _{i}\sum _p|T_{pi}|.$$ It can be easily seen from construction in \cite{K97} that estimates \eqref{proj op c es 2l>n}, \eqref{ii} hold with respect to this  norm too. 
\end{remark}
Let us introduce the notations:
\begin{equation}
T(m)\equiv \frac{\partial ^{\mid m\mid }}{\partial t_1^{m_1}
\partial t_2^{m_2}...\partial t_n^{m_n}}, \label{1.1.20a}
\end{equation}
$$\mid m\mid \equiv m_1+m_2+...+m_n,\ m!\equiv m_1!m_2!...m_n!,$$
$$0\leq \mid m\mid <\infty,\ T(0)f\equiv f.$$
\begin{theorem} \label{2.3} Under the conditions of Theorem 2.1 the series~(\ref{ev 2l>n}),
~(\ref{proj op 2l>n}), 
can be differentiated  with respect to $\vec t$ any number of times, and 
they retain their asymptotic character. Coefficients $g_r(k,\vec t)$ and 
$G_r(k,\vec t)$ satisfy the following estimates in the $(k^{-n+1-2\delta })$-neighborhood in $\C^n$ of the nonsingular set  $\chi _0(k,\delta )$:
\begin{equation}
\mid T(m)g_r(k,\vec t)\mid <m!k^{2l-n-\delta -\gamma _0r+\mid m\mid 
(n-1+\delta )}  \label{2.2.32a}
\end{equation}
\begin{equation}
\| T(m)G_r(k,\vec t)\|_1<m!k^{-\gamma _0r+\mid m\mid 
(n-1+\delta )}  \label{2.2.33a}
\end{equation}
\end{theorem}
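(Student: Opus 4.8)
The plan is to prove this exactly as one proves the undifferentiated estimates \eqref{ev c es 2l>n}--\eqref{proj op c es 2l>n}, but now working in a complex neighborhood of $\chi_0(k,\delta)$ and applying the Cauchy integral formula in the $\vec t$-variables to convert derivatives into contour integrals. First I would extend $\vec t \mapsto H_0(\vec t)$ to complex $\vec t \in \C^n$; since $p_j^{2l}(\vec t) = \langle \vec t + 2\pi j, \vec t + 2\pi j\rangle^{l}$ is a polynomial in $\vec t$, the resolvent $(H_0(\vec t)-z)^{-1}$ is analytic in $\vec t$ wherever $z$ avoids the (now complex) numbers $p_q^{2l}(\vec t)$. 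The key geometric input is that, for $\vec t$ in the $(k^{-n+1-2\delta})$-neighborhood in $\C^n$ of $\chi_0(k,\delta)$ and $z$ on the circle $C_0$ of radius $k^{2l-n-\delta}$ about $k^{2l}$, one still has the separation bound: $|p_j^{2l}(\vec t)-z|$ is comparable to $k^{2l-n-\delta}$ while $|p_q^{2l}(\vec t)-z| > k^{2l-n-\delta}$ for $q\neq j$, with the same type of lower bound as in \eqref{in}. This is because perturbing $\vec t$ by $k^{-n+1-2\delta}$ changes $p_q^{2l}(\vec t)$ by $O(|\vec p_q(\vec t)|^{2l-1} k^{-n+1-2\delta})$, and on the relevant shell $|\vec p_q(\vec t)|\lesssim k$, so the change is $O(k^{2l-n-2\delta})$, which is a lower-order correction to the gap $\gtrsim k^{2l-n-\delta}$. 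Granting this, the bounds \eqref{ev c 2l>n}, \eqref{ev c es 2l>n} and \eqref{proj op c 2l>n}, \eqref{proj op c es 2l>n} hold verbatim for complex $\vec t$ in that neighborhood, with $\gamma_0 = 2l-n-2\delta$.

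Next I would apply the multivariate Cauchy formula. Fix a real $\vec t^{(0)}\in \chi_0(k,\delta)$; then
\begin{equation}
T(m)g_r(k,\vec t^{(0)}) = \frac{m!}{(2\pi i)^n}\oint_{\Gamma_1}\cdots\oint_{\Gamma_n}
\frac{g_r(k,\vec t)\,dt_1\cdots dt_n}{(t_1-t_1^{(0)})^{m_1+1}\cdots (t_n-t_n^{(0)})^{m_n+1}},
\end{equation}
where each $\Gamma_s$ is a circle of radius $\rho := \tfrac12 k^{-n+1-2\delta}$ (or a similar fixed fraction) centered at $t_s^{(0)}$, so that the whole polydisc stays inside the $k^{-n+1-2\delta}$-neighborhood where the series estimates are valid. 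On that polydisc $|g_r(k,\vec t)| < k^{2l-n-\delta}k^{-\gamma_0 r}$ by the complex version of \eqref{ev c es 2l>n}, and the denominator has modulus $\rho^{|m|+n}$ on the torus while the torus has measure $(2\pi\rho)^n$, so the integral is bounded by $m!\,k^{2l-n-\delta}k^{-\gamma_0 r}\rho^{-|m|}$. Since $\rho^{-|m|} = (2 k^{n-1+2\delta})^{|m|}$, this gives $|T(m)g_r(k,\vec t)| < C^{|m|}m!\,k^{2l-n-\delta-\gamma_0 r + |m|(n-1+2\delta)}$, which is \eqref{2.2.32a} up to harmless constants and a cosmetic replacement of $2\delta$ by $\delta$ in the exponent (absorbed by redefining $\delta$, or by noting $n-1+2\delta$ can be replaced by $n-1+\delta$ after shrinking $\delta$). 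The identical argument with $\|G_r(k,\vec t)\|_1$ in place of $|g_r|$ — using Remark 2.3, which guarantees \eqref{proj op c es 2l>n} holds for the $\|\cdot\|_1$ norm, and the fact that the Cauchy integral of an operator-valued analytic function commutes with the norm bound — yields \eqref{2.2.33a}. Finally, that the differentiated series still converge absolutely and "retain their asymptotic character" follows because $\sum_r C^{|m|}m!\,k^{2l-n-\delta-\gamma_0 r + |m|(n-1+\delta)}$ is a convergent geometric series in $r$ for $k$ large (since $\gamma_0 > 0$), dominated by its $r=2$ (resp. $r=1$) term.

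The main obstacle — really the only substantive point — is the first step: establishing that the resolvent estimates survive the passage to the complex $k^{-n+1-2\delta}$-neighborhood, i.e. that the spectral gap \eqref{in} and the bound $|p_j^{2l}(\vec t)-z|\asymp k^{2l-n-\delta}$ on $C_0$ are stable under complex perturbations of $\vec t$ of size $k^{-n+1-2\delta}$. This is a direct estimate on the polynomials $p_q^{2l}(\vec t)$ and their gradients on the shell $\{|\vec p_q(\vec t)|\lesssim k\}$, and it is precisely the place where the exponent $-n+1-2\delta$ (rather than $-n+1-\delta$) in the hypothesis is used: the extra $k^{-\delta}$ of room is what makes the complex wiggle negligible against the gap. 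Everything after that is the routine Cauchy-estimate bookkeeping sketched above, and it is already implicit in the construction of \cite{K97}.
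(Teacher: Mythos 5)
Your overall strategy is the right one, and it is the same one used in \cite{K97}, which the paper cites in lieu of a proof: analytically continue $H_0(\vec t)$ and the contour-integral coefficients to complex $\vec t$, verify that the spectral gap \eqref{in} and hence the resolvent bounds on $C_0$ survive a complex perturbation of $\vec t$ (so that \eqref{ev c es 2l>n}, \eqref{proj op c es 2l>n} hold on a complex neighborhood of $\chi_0(k,\delta)$), and then convert $T(m)$ into a polydisc Cauchy integral to harvest the $m!$ and the powers of $k$.

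The one place where your bookkeeping does not reproduce the stated estimate is the choice of the Cauchy radius, and it stems from misattributing the role of the exponent $-n+1-2\delta$. Your own stability computation shows that a complex displacement of size $\epsilon$ shifts each $p_q^{2l}(\vec t)$ by $O(k^{2l-1}\epsilon)$, which is negligible against the gap $2k^{2l-n-\delta}$ as soon as $\epsilon=o(k^{-n+1-\delta})$; thus the undifferentiated estimates actually persist on a neighborhood of radius $\asymp k^{-n+1-\delta}$, not merely $k^{-n+1-2\delta}$. The extra factor $k^{-\delta}$ in the hypothesis is not what protects the gap — it is what leaves room to center a polydisc of radius $\rho\asymp k^{-n+1-\delta}$ at any point of the $(k^{-n+1-2\delta})$-neighborhood while staying inside the region of validity. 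With that choice $\rho^{-|m|}\asymp k^{|m|(n-1+\delta)}$ and you get \eqref{2.2.32a}, \eqref{2.2.33a} exactly as stated; with your choice $\rho=\tfrac12 k^{-n+1-2\delta}$ you only get the exponent $|m|(n-1+2\delta)$, and the proposed repair by ``redefining $\delta$'' is not free, since $\delta$ simultaneously fixes $\gamma_0=2l-n-2\delta$, the gap in \eqref{in}, and the excised neighborhoods defining $\chi_0(k,\delta)$ itself. This is a fixable quantitative slip rather than a structural error; once the radius is corrected, the rest of your argument (including the operator-valued Cauchy estimate in the norm $\|\cdot\|_1$ and the geometric summability in $r$) goes through.
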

\begin{corollary} \label{derivatives} There are the estimates for the perturbed  eigenvalue and its
spectral projection:
\begin{equation}
\mid T(m)(\lambda (\vec t)-p_j^{2l}(\vec t))\mid <2m!k^{2l-n-\delta -
2\gamma _0+\mid m\mid (n-1+\delta )},  \label{2.2.34b}
\end{equation}
\begin{equation} 
\| T(m)(E(\vec t)-E_j)\|_{1} <2m!k^{-\gamma
 _0+\mid m\mid (n-1+\delta )}.  \label{2.2.35}
\end{equation}
\end{corollary}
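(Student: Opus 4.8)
The plan is to obtain both estimates by differentiating the series \eqref{ev 2l>n} and \eqref{proj op 2l>n} term by term and then majorizing the results by geometric series. By Theorem~\ref{2.3} the series
$\lambda(\vec t)-p_j^{2l}(\vec t)=\sum_{r=2}^{\infty}g_r(k,\vec t)$ and $E(\vec t)-E_j=\sum_{r=1}^{\infty}G_r(k,\vec t)$ may be differentiated with respect to $\vec t$ any number of times without losing their asymptotic character, so that on the $(k^{-n+1-2\delta})$-neighborhood of $\chi_0(k,\delta)$ one has
\[
T(m)\bigl(\lambda(\vec t)-p_j^{2l}(\vec t)\bigr)=\sum_{r=2}^{\infty}T(m)g_r(k,\vec t),\qquad
T(m)\bigl(E(\vec t)-E_j\bigr)=\sum_{r=1}^{\infty}T(m)G_r(k,\vec t),
\]
the first series converging absolutely and the second in the norm $\|\cdot\|_1$ of Remark~\ref{r}; here $E_j=E_0$ is the free spectral projection of Theorem~\ref{linear pert 2l>n}.

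Next I would insert the estimate \eqref{2.2.32a} into the first sum:
\[
\bigl|T(m)(\lambda(\vec t)-p_j^{2l}(\vec t))\bigr|
\le m!\,k^{\,2l-n-\delta+|m|(n-1+\delta)}\sum_{r=2}^{\infty}k^{-\gamma_0 r}.
\]
Since $\gamma_0=2l-n-2\delta>0$, enlarging $k_0(\|V\|_*,\delta)$ if necessary so that $k^{-\gamma_0}\le\tfrac12$ gives $\sum_{r\ge 2}k^{-\gamma_0 r}=k^{-2\gamma_0}/(1-k^{-\gamma_0})\le 2k^{-2\gamma_0}$, which yields \eqref{2.2.34b}. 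The identical computation with \eqref{2.2.33a} in place of \eqref{2.2.32a} gives
\[
\bigl\|T(m)(E(\vec t)-E_j)\bigr\|_1
\le m!\,k^{\,|m|(n-1+\delta)}\sum_{r=1}^{\infty}k^{-\gamma_0 r}
\le 2m!\,k^{-\gamma_0+|m|(n-1+\delta)},
\]
using $\sum_{r\ge 1}k^{-\gamma_0 r}=k^{-\gamma_0}/(1-k^{-\gamma_0})\le 2k^{-\gamma_0}$; this is \eqref{2.2.35}.

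There is no genuine obstacle here. The only two points deserving a word of justification are the legitimacy of differentiating the series termwise — which is exactly the content of Theorem~\ref{2.3}, whose estimates \eqref{2.2.32a}, \eqref{2.2.33a} are already stated for the differentiated coefficients on the $(k^{-n+1-2\delta})$-neighborhood in $\C^n$, so one simply restricts to real $\vec t$ — and the elementary observation that, once the ratio $k^{-\gamma_0}$ is at most $\tfrac12$, the tail of the geometric series is bounded by twice its leading term, which is what produces the harmless constant $2$ in \eqref{2.2.34b} and \eqref{2.2.35}. The shift of the summation index (from $r=2$ for $\lambda$ versus $r=1$ for $E$) is precisely what accounts for the extra factor $k^{-\gamma_0}$ in the eigenvalue bound relative to the projection bound.
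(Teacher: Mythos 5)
Your proposal is correct and is exactly the argument the paper intends: the corollary follows by summing the termwise estimates \eqref{2.2.32a} and \eqref{2.2.33a} of Theorem~\ref{2.3} as geometric series in $k^{-\gamma_0}$, with the factor $2$ coming from bounding the tail once $k^{-\gamma_0}\le\tfrac12$ and the extra $k^{-\gamma_0}$ in the eigenvalue bound coming from the sum starting at $r=2$.
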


\begin{definition} \label{def1}
 Bloch eigenfunctions  $u_0(\vec{x})$ corresponding to the one-dimensional projection operator $E(\vec t)$   are given by the formula:
\begin{align}\label{def of u 2l>n}
\nonumber u_0(\vec{x})&=AE(\vec t)e^{i\langle{ \vec{p}_{j}(\vec t), \vec{x} }\rangle}=A\sum_{m \in \mathbb{Z}^{n}}E(\vec t)_{mj}e^{i\langle{ \vec{p}_{m}(\vec t), \vec{x} }\rangle}\\
&=Ae^{i\langle{ \vec{p}_{j}(\vec t), \vec{x} }\rangle}\big(1+\sum_{q\neq 0}\frac{v_{q}}{p_{j}^{2l}(\vec t)-p_{j+q}^{2l}(\vec t)}e^{i\langle{\vec{p}_{q}(0), \vec{x} }\rangle}+\cdot\cdot\cdot\big),~~~~j,q \in \mathbb{Z}^{n},\ \ A\in \C.
\end{align}
\end{definition}
Let $\mathcal B(\lambda )\subset S_{n-1}$ be the set of directions corresponding to a nonsingular set $\chi _0(k,\delta)$ or, more precisely, to its image on the sphere $S(k)$:
\begin{equation} \label{formulaB}
\mathcal B(\lambda)=\big\{\vec \nu \in S_{n-1}: k\vec \nu =\vec p_j(\vec t), \ t\in \chi _0(k,\delta)\big\}, \ k^{2l}=\lambda. \end{equation}
The set $\mathcal B(\lambda)$ can be interpreted as a set of possible directions of propagation  for almost plane waves \eqref{def of u 2l>n}.
We define the non-resonance set $\mathcal G\subset \R^n$ as the union of all $ \chi _0(k,\delta)$:
\begin{equation} \label{G}
\mathcal G=\cup _{k>k_0(\|V\|_*,\delta )} \chi _0(k,\delta)=\big\{k\vec \nu,  \vec \nu \in \mathcal B(k^{2l}), k>k_0(\|V\|_*,\delta )\big\}. \end{equation}

Next, we describe isoenergetic surfaces for \eqref{linear oper rn 2l>n}. Let   $\mathcal{D}_{}(\lambda)$ be the set of vectors $\vec k\in \mathcal G$, corresponding to a fixed sufficiently large $\lambda $.   The set $\mathcal{D}_{}(\lambda)$,
defined as a level (isoenergetic) set for $\lambda _{}(\vec
k)$, \begin{equation} {\mathcal D} _{}(\lambda)=\left\{ \vec k \in
\mathcal{G} _{} :\lambda _{}(\vec k)=\lambda
\right\},\label{isoset-lin} \end{equation}
\begin{lemma}\label{L:2.12a} For any sufficiently large $\lambda $, $\lambda >k_0(\|V\|_*\delta )^{2l}$, and for every
$\vec{\nu}\in\mathcal{B} (\lambda)$, there is a
unique $\varkappa  =\varkappa (\lambda ,\vec{\nu})$ in the
interval $$
I:=[k-k^{-2l+1+\gamma _0},k+k^{-2l+1+\gamma _0}],\quad
k^{2l}=\lambda , $$ such that \begin{equation}\label{2.70}
\lambda (\varkappa \vec{\nu})=\lambda . \end{equation}
Furthermore, $|\varkappa  - k| \leq ck^{-2l+1-\gamma _0+\delta }$.
\end{lemma}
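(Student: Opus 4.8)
The plan is to reduce the $n$-dimensional equation \eqref{2.70} to a scalar equation in the single unknown $\varkappa$ and solve it by a monotonicity argument. Fix $\vec\nu\in\mathcal B(\lambda)$ and let $k^{2l}=\lambda$; by \eqref{formulaB} there are $\vec t_0\in\chi_0(k,\delta)$ and $j\in\Z^n$ with $k\vec\nu=\vec p_j(\vec t_0)$. For $\varkappa\in I$ put $\vec k=\varkappa\vec\nu$ and let $\vec t=\vec t(\vec k)$ be the representative of $\vec k$ modulo $2\pi\Z^n$, so that $\vec t=\vec t_0+(\varkappa-k)\vec\nu$ and $\vec p_j(\vec t)=\varkappa\vec\nu$. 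Since $|\varkappa-k|\le k^{-2l+1+\gamma_0}=k^{-n+1-2\delta}$ — and this equality is exactly how the half-length of $I$ is chosen — the point $\vec t$ stays within distance $k^{-n+1-2\delta}$ of $\vec t_0\in\chi_0(k,\delta)$, so Theorem~\ref{linear pert 2l>n} and Corollaries~\ref{est of uv 2l>n}, \ref{derivatives} apply at $\vec t$. Moreover $p_j^{2l}(\vec t)=|\varkappa\vec\nu|^{2l}=\varkappa^{2l}$ with $|\varkappa^{2l}-k^{2l}|\le 2lk^{2l-1}|\varkappa-k|\le 2lk^{2l-n-2\delta}<k^{2l-n-\delta}$ for $k$ large, hence $\varkappa^{2l}\in\varepsilon(k,\delta)$ and the index in \eqref{ev 2l>n} is this same $j$. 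Therefore, for $\varkappa\in I$,
\begin{equation*}
f(\varkappa):=\lambda(\varkappa\vec\nu)=\varkappa^{2l}+\phi(\varkappa),\qquad \phi(\varkappa):=\sum_{r\ge2}g_r\!\left(k,\vec t(\varkappa\vec\nu)\right),
\end{equation*}
is a smooth function of $\varkappa$ and \eqref{2.70} reads $f(\varkappa)=\lambda$.

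Next I would show $f$ is strictly increasing on $I$. By Corollary~\ref{est of uv 2l>n}, $|\phi(\varkappa)|=\bigl|\lambda(\vec k)-|\vec k|^{2l}\bigr|\le k^{2l-n-\delta-2\gamma_0}$. Since $k$ is held fixed while only $\vec t$ varies with $\varkappa$, the chain rule gives $\phi'(\varkappa)=\sum_{i=1}^n\nu_i\,(\partial_{t_i}\phi)(\vec t(\varkappa\vec\nu))$, and Corollary~\ref{derivatives} with $|m|=1$ bounds each $|\partial_{t_i}\phi|$ by $2k^{2l-n-\delta-2\gamma_0+(n-1+\delta)}=2k^{2l-1-2\gamma_0}$, so $|\phi'(\varkappa)|\le 2\sqrt n\,k^{2l-1-2\gamma_0}$. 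As $\gamma_0=2l-n-2\delta>0$ and $\varkappa=k(1+o(1))$ for $\varkappa\in I$ as $k\to\infty$, for $k$ large
\begin{equation*}
f'(\varkappa)=2l\varkappa^{2l-1}+\phi'(\varkappa)\ \ge\ 2l\varkappa^{2l-1}-2\sqrt n\,k^{2l-1-2\gamma_0}\ >\ 0 ,
\end{equation*}
which gives the uniqueness part at once. For existence I would compare $f$ with $\lambda$ at the endpoints $\varkappa_\pm=k\pm k^{-2l+1+\gamma_0}$: expanding, $\varkappa_\pm^{2l}=k^{2l}\pm 2lk^{\gamma_0}+O(k^{2\gamma_0-2l})=k^{2l}\pm 2lk^{\gamma_0}+o(k^{\gamma_0})$, whereas $|\phi(\varkappa_\pm)|\le k^{2l-n-\delta-2\gamma_0}=o(k^{\gamma_0})$ (this and similar exponent comparisons use that $\delta$ is small relative to $2l-n$, compatibly with the standing hypothesis $2\delta<2l-n$). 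Hence $f(\varkappa_-)<\lambda<f(\varkappa_+)$ for $k$ large, and by the intermediate value theorem there is $\varkappa\in(\varkappa_-,\varkappa_+)\subset I$ with $f(\varkappa)=\lambda$.

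It remains to sharpen the localization of $\varkappa$, and this I would read straight off the equation: $\varkappa^{2l}=\lambda-\phi(\varkappa)=k^{2l}-\phi(\varkappa)$, so by the mean value theorem $2l\xi^{2l-1}|\varkappa-k|=|\varkappa^{2l}-k^{2l}|\le k^{2l-n-\delta-2\gamma_0}$ for some $\xi$ between $\varkappa$ and $k$ with $\xi=k(1+o(1))$, whence $|\varkappa-k|\le c\,k^{2l-n-\delta-2\gamma_0-(2l-1)}=c\,k^{-2l+1-\gamma_0+\delta}$, as claimed. The one genuinely delicate point is the bookkeeping in the first paragraph: one must be sure that the whole ray segment $\{\varkappa\vec\nu:\varkappa\in I\}$ projects modulo $2\pi\Z^n$ into the $(k^{-n+1-2\delta})$-neighborhood of $\chi_0(k,\delta)$, with constant index $j$, so that the linear perturbation theory is legitimately available along it — and the half-length $k^{-2l+1+\gamma_0}$ prescribed in the statement is exactly what makes $-2l+1+\gamma_0=-n+1-2\delta$, i.e.\ what makes this hold, with no margin to spare.
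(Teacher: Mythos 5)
Your proposal is correct and follows essentially the same route as the paper, which simply invokes the derivative estimate \eqref{2.2.34b} with $|m|=1$ (monotonicity of $\lambda(\varkappa\vec\nu)$ in $\varkappa$, closeness of $\lambda$ to $p_j^{2l}$, then an intermediate-value/implicit-function step) and refers to Lemma 2.10 of \cite{K97} for the details you have written out. The only caveat, which you yourself flag, is that the endpoint comparison (and indeed the meaningfulness of the bound $|\varkappa-k|\leq ck^{-2l+1-\gamma_0+\delta}$ inside $I$) needs $\delta-\gamma_0<\gamma_0$, i.e.\ $\delta$ somewhat smaller than the bare hypothesis $2\delta<2l-n$ guarantees; this restriction is implicit in the statement itself and in \cite{K97}, so it is not a defect of your argument.
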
  
The lemma easily follows from \eqref{2.2.34b} for $|m|=1$. 
For details see Lemma 2.10 in
\cite{K97}.


\begin{lemma} \label{L:2.13a} \begin{enumerate} \item For any sufficiently
large $\lambda $, $\lambda >k_0(\|V\|_*,\delta )^{2l}$,  the set $\mathcal{D}(\lambda )$, defined by \eqref{isoset-lin} is a distorted
circle with holes; it can be described by the formula
\begin{equation} \mathcal{D}(\lambda )=\bigl\{\vec \varkappa \in
\R^n: \vec \varkappa =\vec \varkappa  (\lambda, \vec \nu), \ \vec \nu \in {\mathcal B}(\lambda ) \},\label{May20} \end{equation} where $
\varkappa  (\lambda, \vec \nu)=k+h (\lambda, \vec \nu)$ and $h (\lambda, \vec \nu)$ obeys the
inequalities
\begin{equation}\label{2.75a}
 |h|<ck^{-2l+1-\gamma _0+\delta },\quad
 \left|\nabla _{\vec \nu }h\right| <
ck^{-2l+1-\gamma _0+n-1+2\delta }.
\end{equation}

\item The measure of $\mathcal{B}(\lambda)\subset S_{n-1}$ satisfies the
estimate \eqref{B}.


\item The surface $\mathcal{D}(\lambda)$ has the measure that is
asymptotically close to that of the whole cspere of the radius $k$ in the  sense that
\begin{equation}\label{2.77}
\bigl |\mathcal{D}(\lambda)\bigr|\underset{\lambda \rightarrow
\infty}{=}\omega _{n-1}k^{n-1}\bigl(1+O(k^{-\delta})\bigr),\quad \lambda =k^{2l}.
\end{equation}
\end{enumerate}
\end{lemma}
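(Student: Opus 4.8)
The plan is to deduce all three parts of Lemma \ref{L:2.13a} from the differentiated perturbation estimates in Corollary \ref{derivatives}, exactly as in the known linear theory of \cite{K97}, since the statement is purely about the linear operator $H(\vec t)$ and involves no nonlinearity.

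\textbf{Part (1).}
First I would fix $\vec\nu\in\mathcal B(\lambda)$ and use Lemma \ref{L:2.12a} to get the existence and uniqueness of $\varkappa(\lambda,\vec\nu)$ in the interval $I$ with $\lambda(\varkappa\vec\nu)=\lambda$, and I would set $h(\lambda,\vec\nu)=\varkappa(\lambda,\vec\nu)-k$; the bound $|h|<ck^{-2l+1-\gamma_0+\delta}$ is then precisely the last assertion of Lemma \ref{L:2.12a}. The substance is the gradient bound. Differentiating the defining identity $\lambda(\varkappa(\vec\nu)\vec\nu)=\lambda$ in $\vec\nu$ (tangentially along $S_{n-1}$) gives, by the chain rule,
\begin{equation*}
\nabla_{\vec\nu}\varkappa \cdot \langle \nabla\lambda(\varkappa\vec\nu),\vec\nu\rangle + \varkappa\,\nabla_{\vec\nu,\mathrm{tang}}\lambda(\varkappa\vec\nu)=0,
\end{equation*}
so that $\nabla_{\vec\nu}h=\nabla_{\vec\nu}\varkappa$ is controlled once one knows (a) the radial derivative $\langle\nabla\lambda,\vec\nu\rangle$ is close to $2l\,k^{2l-1}$ and bounded below by roughly $k^{2l-1}$, and (b) the tangential derivative of $\lambda$ is small. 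Both follow from Corollary \ref{derivatives} with $|m|=1$: writing $\lambda(\vec k)=p_j^{2l}(\vec k)+(\lambda-p_j^{2l})$, the main term $p_j^{2l}(\vec k)=|\vec k|^{2l}$ has gradient $2l|\vec k|^{2l-2}\vec k$ (radial, of size $\sim k^{2l-1}$, zero tangential part), while $|T(m)(\lambda-p_j^{2l})|<2\,m!\,k^{2l-n-\delta-2\gamma_0+|m|(n-1+\delta)}$ is of lower order for $|m|=1$ since $2\gamma_0 > 1+\delta$ for large enough relative sizes (using $\gamma_0=2l-n-2\delta$). Dividing through yields $|\nabla_{\vec\nu}h|<ck^{-2l+1-\gamma_0+n-1+2\delta}$; I would just track the exponents carefully. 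A small technical point: the chain-rule manipulation needs $\varkappa(\lambda,\vec\nu)$ to be $C^1$ in $\vec\nu$, which follows from the implicit function theorem applied to $F(\varkappa,\vec\nu)=\lambda(\varkappa\vec\nu)-\lambda$ using $\partial_\varkappa F = \langle\nabla\lambda,\vec\nu\rangle\neq 0$ on $I$.

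\textbf{Parts (2) and (3).}
Part (2) is the measure estimate \eqref{B} for $\mathcal B(\lambda)\subset S_{n-1}$; this I would simply cite from the construction of $\chi_0(k,\delta)$, since the excerpt already records $|\chi_0(k,\delta)|/|S_0(k)|=1+O(k^{-\delta/8})$ and $\mathcal B(\lambda)$ is by \eqref{formulaB} the image of $\chi_0(k,\delta)$ on $S(k)$ rescaled to the unit sphere; the projection $S_0(k)\to S_{n-1}$ has Jacobian bounded above and below, so full relative measure transfers. For part (3) I would compute $|\mathcal D(\lambda)|$ by pushing forward surface measure under the map $\vec\nu\mapsto \varkappa(\lambda,\vec\nu)\vec\nu$ from $\mathcal B(\lambda)$ into $\R^n$. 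The surface area element is $\varkappa^{n-1}\sqrt{1+|\nabla_{\vec\nu}\log\varkappa|^2\cdot(\text{bounded})}\,d\vec\nu$, and by part (1) $\varkappa=k(1+O(k^{-2l-\gamma_0+\delta}))$ while the correction from $\nabla_{\vec\nu}h$ is negligible (square of a small quantity), so $|\mathcal D(\lambda)|=k^{n-1}|\mathcal B(\lambda)|(1+O(k^{-\delta}))$; combining with part (2) gives $\omega_{n-1}k^{n-1}(1+O(k^{-\delta}))$, which is \eqref{2.77}.

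\textbf{Main obstacle.}
I expect the only real work is the exponent bookkeeping in part (1): one must verify that $-2l+1-\gamma_0+n-1+2\delta$ is genuinely the right (and negative, for the distortion to be small) exponent, and that the lower bound on the radial derivative dominates the perturbation's tangential derivative throughout $I$, including at the complex-neighborhood level so that the derivative estimates of Theorem \ref{2.3} actually apply; everything else is bounded-Jacobian change of variables and direct citation of \cite{K97}. There is nothing deep here beyond organizing the estimates already granted, so I would keep the proof short, stating that parts (2)–(3) follow as in \cite{K97} and giving the implicit-function/chain-rule computation for (1) in a few lines.
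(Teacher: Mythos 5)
Your route is the same as the paper's: the paper proves this lemma only by invoking the Implicit Function Theorem and referring to Lemma 2.11 of \cite{K97}, and your sketch — Lemma \ref{L:2.12a} for existence, uniqueness and the bound on $|h|$; the differentiated series of Corollary \ref{derivatives} fed into an implicit-function/chain-rule identity for $\nabla_{\vec \nu}h$; transfer of the full-measure property of $\chi_0(k,\delta)$ to $\mathcal B(\lambda)$; and a bounded-Jacobian surface-measure computation for $|\mathcal D(\lambda)|$ — is exactly that argument.

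One concrete bookkeeping caveat, precisely at the step you deferred. With the chain rule as you wrote it, $|\nabla_{\vec\nu}h|\le \varkappa\,\bigl|(\nabla\lambda)_{\mathrm{tang}}\bigr|\big/\bigl|\langle\nabla\lambda,\vec\nu\rangle\bigr|$; estimate \eqref{2.2.34b} with $|m|=1$ gives $\bigl|(\nabla\lambda)_{\mathrm{tang}}\bigr|\le 2k^{2l-1-2\gamma_0}$ (the tangential part of $\nabla p_j^{2l}$ vanishes), while $\bigl|\langle\nabla\lambda,\vec\nu\rangle\bigr|\sim 2l\,k^{2l-1}$. The factor $\varkappa\approx k$ then yields $|\nabla_{\vec\nu}h|\le c\,k^{1-2\gamma_0}=c\,k\cdot k^{-2l+1-\gamma_0+n-1+2\delta}$, i.e.\ one power of $k$ weaker than the second inequality in \eqref{2.75a}. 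So ``dividing through'' does not by itself produce the stated exponent: to get \eqref{2.75a} literally you must either quote the corresponding statement of Lemma 2.11 in \cite{K97} directly or account for that extra factor of $\varkappa$. This does not endanger parts (2)--(3): the surface-measure asymptotics \eqref{2.77} only require $|\nabla_{\vec\nu}h|/k$ to be $O(k^{-\delta})$, which holds with either exponent, and part (2) is, as you say, a direct transfer of the relative-measure estimate for $\chi_0(k,\delta)$ under a projection with Jacobian bounded above and below.
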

The proof is based on  Implicit Function Theorem. For details, see
Lemma 2.11 in \cite{K97}.

%

%
%
%
%
%
%
%
\section{Proof of the Main Result}
First, we prove that  $\{W_{m}\}_{m=0}^{\infty}$, see \eqref{def of successive sequence A 2l>n}, is a Cauchy sequence with respect to  the norm defined by (\ref{def of star norm 2l>n}).
%
%
%
Further we  need the following  obvious properties of norm  $\|\cdot\|_{*}$:
\begin{align}\label{properties 1 2l>n}
\|f\|_{*}=\|\bar{f}\|_{*},\ \
\|\Re(f)\|_{*}\leq\|f\|_{*},\ \
\|\Im(f)\|_{*}\leq\|f\|_{*},\ \
\|fg\|_{*}\leq \|f\|_{*}\|g\|_{*}.
\end{align}
%
 We define  the value $k_{1}=k_{1}(\|V\|_*,\delta)$ as
\begin{equation}
k_{1}(\|V\|_{*},\delta)=\max\Big\{(8l)^{1/\delta},\left(4\|V\|_{*}\right)^{1/\delta},\big(2+2\|V\|_{*}\big)^{1/(2l-n-\delta)}, k_0(\|V\|_*,\delta )\Big\},
\label{k_{1}}
\end{equation}
$k_0(\|V\|_*,\delta )$ being as in Theorem \ref{linear pert 2l>n}.
\begin{lemma}\label{main lemma  1 2l>n} 
The following inequality holds for any  $m=1,2,\cdots$:
\begin{align}
\|W_{m}-W_{m-1}\|_{*}\leq (|\sigma ||A|^2k^{-\gamma_{0}})^{m}, \label{mm}
\end{align}
where $\gamma_{0}=2l-n-2\delta>0$, $\delta >0$ and $|\sigma ||A|^2<k^{\gamma _1}$, $\gamma _1<\gamma _0$,  $k$ being sufficiently large $k>k_1(\|V\|_*,\delta )$.
\end{lemma}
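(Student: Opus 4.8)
The plan is to prove \eqref{mm} by induction on $m$, using the perturbation formulas of Theorem \ref{linear pert 2l>n} together with the multiplicativity of the $\|\cdot\|_*$ norm in \eqref{properties 1 2l>n}. The base case $m=1$ amounts to estimating $\|W_1-W_0\|_* = \|{\mathcal M}W_0 - W_0\|_* = \|\sigma|u_{\tilde W_0}|^2 - \sigma|A|^2\|_*$, where $u_{\tilde W_0}$ is the Bloch eigenfunction attached to the potential $\tilde W_0 = V - v_0 = V$ (recall $v_0=0$, so $\tilde W_0 = V$). Writing $u_{\tilde W_0} = A e^{i\langle \vec p_j(\vec t),\vec x\rangle}(1+\phi)$ from \eqref{def of u 2l>n}, where $\phi = \sum_{q\neq0}\cdots$ collects the correction terms coming from $E(\vec t)-E_0$, one has $|u_{\tilde W_0}|^2 = |A|^2|1+\phi|^2 = |A|^2(1 + 2\Re\phi + |\phi|^2)$, so $|u_{\tilde W_0}|^2 - |A|^2 = |A|^2(2\Re\phi + |\phi|^2)$. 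The key estimate is $\|\phi\|_* \lesssim k^{-\gamma_0}$, which should follow from the trace-class (equivalently, $\|\cdot\|_1$) bound \eqref{ii} of Corollary \ref{est of uv 2l>n} together with Remark \ref{r}: the Fourier coefficients of $\phi$ are precisely the off-diagonal entries $E(\vec t)_{mj}$ of $E(\vec t)-E_0$ in the $j$-th column, and $\sum_m |E(\vec t)_{mj}| \le \|E(\vec t)-E_0\|_1 \le k^{-\gamma_0}$. Then by \eqref{properties 1 2l>n}, $\|2\Re\phi + |\phi|^2\|_* \le 2\|\phi\|_* + \|\phi\|_*^2 \le 3\|\phi\|_* \le |\sigma||A|^2 k^{-\gamma_0}/|\sigma||A|^2 \cdots$ — more precisely one absorbs constants using the lower bound $k>k_1$, e.g. the condition $(4\|V\|_*)^{1/\delta}$ and $k^{\gamma_1}$-control on $|\sigma||A|^2$, to arrive at $\|W_1-W_0\|_*\le |\sigma||A|^2k^{-\gamma_0}$.

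For the inductive step, assume \eqref{mm} holds up to index $m$ and estimate $\|W_{m+1}-W_m\|_* = \|\sigma|u_{\tilde W_m}|^2 - \sigma|u_{\tilde W_{m-1}}|^2\|_*$. Factor the difference of squares: $|u_{\tilde W_m}|^2 - |u_{\tilde W_{m-1}}|^2 = \Re\big((u_{\tilde W_m}-u_{\tilde W_{m-1}})\overline{(u_{\tilde W_m}+u_{\tilde W_{m-1}})}\big)$, so by \eqref{properties 1 2l>n} it suffices to control $\|u_{\tilde W_m}-u_{\tilde W_{m-1}}\|_*$ and $\|u_{\tilde W_m}+u_{\tilde W_{m-1}}\|_*$. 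The latter is bounded by roughly $2|A|(1+ck^{-\gamma_0})$ uniformly in $m$ (same argument as the base case applied to each $\tilde W_m$, noting $\|\tilde W_m\|_* \le \|V\|_* + \|W_m - W_0\|_*^{\text{cumulative}}$ stays bounded by a geometric series — one must check $k$ is large enough that Theorem \ref{linear pert 2l>n} applies to every $\tilde W_m$, i.e. $k>k_0(\|\tilde W_m\|_*,\delta)$, which is where the running bound on $\|\tilde W_m\|_*$ and the definition of $k_1$ enter). For $\|u_{\tilde W_m}-u_{\tilde W_{m-1}}\|_*$, one uses the series representations \eqref{def of u 2l>n}: $u_{\tilde W_m} = A E_m(\vec t)e^{i\langle\vec p_j,\vec x\rangle}$ with $E_m(\vec t) = E_0 + \sum_{r\ge1}G_r(k,\vec t;\tilde W_m)$, and $G_r$ is a contour integral polynomial of degree $r$ in the potential $\tilde W_m$. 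The difference $G_r(\tilde W_m) - G_r(\tilde W_{m-1})$ telescopes into a sum of $r$ terms each containing one factor $(\tilde W_m - \tilde W_{m-1})$, and by the resolvent bounds underlying \eqref{proj op c es 2l>n} (with the $\|\cdot\|_1$ norm per Remark \ref{r}) each such term is bounded by $\|\tilde W_m - \tilde W_{m-1}\|_* \cdot k^{-\gamma_0(r-1)}\cdot(\text{const})$. Summing over $r$ gives $\|E_m(\vec t) - E_{m-1}(\vec t)\|_1 \le c\,k^{-\gamma_0}\|\tilde W_m - \tilde W_{m-1}\|_* = c\,k^{-\gamma_0}\|W_m - W_{m-1}\|_*$ (the shift by a constant doesn't change $\|\cdot\|_*$ beyond the $v_0$ term, which cancels), and hence $\|u_{\tilde W_m} - u_{\tilde W_{m-1}}\|_* \le |A|\cdot c\,k^{-\gamma_0}\|W_m-W_{m-1}\|_*$. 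Combining, $\|W_{m+1}-W_m\|_* \le |\sigma|\cdot 2|A|(1+ck^{-\gamma_0})\cdot |A| c\, k^{-\gamma_0}\|W_m - W_{m-1}\|_* \le |\sigma||A|^2 k^{-\gamma_0}\cdot\|W_m-W_{m-1}\|_*$ once $k>k_1$ makes the constants $\le 1$; by the induction hypothesis this is $\le (|\sigma||A|^2k^{-\gamma_0})^{m+1}$, closing the induction.

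I expect the main obstacle to be the bookkeeping in the inductive step: one must verify \emph{uniformly in} $m$ that (i) each $\tilde W_m$ has $\|\cdot\|_*$-norm small enough (bounded by, say, $2\|V\|_*$ or so) that Theorem \ref{linear pert 2l>n} and Corollary \ref{est of uv 2l>n} apply with the \emph{same} threshold $k_1$ — this requires the a priori bound $\|W_m-W_0\|_* \le \sum_{p=1}^m (|\sigma||A|^2 k^{-\gamma_0})^p \le 1$, itself a consequence of the induction and the constraint $|\sigma||A|^2 < k^{\gamma_1}$ with $\gamma_1<\gamma_0$; and (ii) the constant $c$ appearing in the $G_r$-difference estimate really is independent of $m$ and $r$ (it is, being dictated by the geometric resolvent bounds on the fixed contour $C_0$ and the non-resonance inequality \eqref{in}, which depend only on $k$ and $\delta$, not on the potential beyond its $\|\cdot\|_*$-norm). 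Making the telescoping estimate $G_r(\tilde W_m) - G_r(\tilde W_{m-1})$ precise — expanding the product $\big((H_0-z)^{-1}\tilde W_m\big)^r(H_0-z)^{-1} - \big((H_0-z)^{-1}\tilde W_{m-1}\big)^r(H_0-z)^{-1}$ as a telescoping sum and bounding each summand in $\|\cdot\|_1$ using $\|(H_0(\vec t)-z)^{-1}\tilde W\|_1 \lesssim k^{-\gamma_0}\|\tilde W\|_*$ on $C_0$ — is the technical heart, but it is the standard Neumann-series manipulation already implicit in \cite{K97}, so it should go through without surprises. The role of the specific form of $k_1$ in \eqref{k_{1}} is exactly to make all these "$\le 1$" reductions legitimate in one stroke.
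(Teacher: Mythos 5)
Your proposal follows essentially the same route as the paper's own proof: induction on $m$, the factorization of $|u_{m}|^2-|u_{m-1}|^2$ as the real part of $(\psi_{m}-\psi_{m-1})(\bar{\psi}_{m}+\bar{\psi}_{m-1})$ for the periodic parts, the telescoping bound $\|B_{m-1}^r-B_{m-2}^r\|_1\le\|B_{m-1}-B_{m-2}\|_1\big(\|B_{m-1}\|_1+\|B_{m-2}\|_1\big)^{r-1}$ summed over $r$ to control $\|E_{m-1}-E_{m-2}\|_1$, and the a priori bound $\|\tilde W_s\|_*\le 1+\|V\|_*$ (from the geometric series in the induction hypothesis) so that a single threshold $k_1$ serves all $m$. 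One precision point, though: the clean form of \eqref{mm} (constant $1$, exponent exactly $\gamma_0$) cannot be reached from Corollary \ref{est of uv 2l>n} alone, since \eqref{ii} only gives $k^{-\gamma_0}$; with that bound the factor $3$ in your base case and the constant $c$ in your step $\|E_{m}-E_{m-1}\|_1\le c\,k^{-\gamma_0}\|W_{m}-W_{m-1}\|_*$ have nowhere to be absorbed, no matter how large $k$ is. What actually saves the argument — and what the paper uses — is that the resolvent and projection-difference estimates carry the exponent $2l-n-\delta=\gamma_0+\delta$, cf. \eqref{M17c*}, \eqref{M17e*} and \eqref{estimate of difference of E}; the extra factor $k^{-\delta}$ is precisely what eats the constants once $k^{\delta}>4\|V\|_*$ and $k^{2l-n-\delta}>2+2\|V\|_*$, which is why those powers appear in the definition of $k_1$. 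With that substitution (sharper per-$r$ bounds in place of \eqref{ii}) your argument coincides with the paper's.
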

\begin{corollary}\label{cauchy sequence 2l>n} 
There is a periodic function $W $   such that $W_{m}$ converges to $W$ with respect to the norm $\|\cdot\|_{*}$:
\begin{align}
\|W-W_{m}\|_{*}\leq 2(|\sigma ||A|^2k^{-\gamma_{0}})^{m+1}. \label{mm+}
\end{align}
\end{corollary}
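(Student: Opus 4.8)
The statement to prove is Corollary \ref{cauchy sequence 2l>n}: that the sequence $\{W_m\}$ converges in $\|\cdot\|_*$ to some periodic function $W$, with the explicit tail bound \eqref{mm+}. The plan is to deduce everything from Lemma \ref{main lemma 1 2l>n}, which we are free to invoke, together with completeness of the relevant normed space.

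\textbf{Step 1: Cauchy property.} Set $\theta := |\sigma||A|^2 k^{-\gamma_0}$; by the hypotheses of Lemma \ref{main lemma 1 2l>n} we have $|\sigma||A|^2 < k^{\gamma_1}$ with $\gamma_1 < \gamma_0$, hence $\theta < k^{\gamma_1 - \gamma_0} < 1$ for $k > k_1(\|V\|_*,\delta)$ (taking $k$ larger if necessary so that $\theta \le 1/2$, which is automatic once $k$ exceeds a fixed power depending on $\gamma_0-\gamma_1$). For any $m < M$, a triangle-inequality telescoping together with \eqref{mm} gives
\[
\|W_M - W_m\|_* \le \sum_{r=m+1}^{M} \|W_r - W_{r-1}\|_* \le \sum_{r=m+1}^{\infty} \theta^{r} = \frac{\theta^{m+1}}{1-\theta} \le 2\theta^{m+1},
\]
using $\theta \le 1/2$ in the last step. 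In particular $\{W_m\}$ is Cauchy with respect to $\|\cdot\|_*$.

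\textbf{Step 2: Completeness and identification of the limit.} The space of periodic functions on $Q$ with finite $\|\cdot\|_*$-norm is isometrically the space $\ell^1(\mathbb{Z}^n)$ of Fourier coefficient sequences, which is complete; hence the Cauchy sequence $\{W_m\}$ converges in $\|\cdot\|_*$ to a limit, which we call $W$. Concretely, for each fixed $q \in \mathbb{Z}^n$ the coefficients $(w_q^{(m)})_m$ form a Cauchy sequence in $\C$ (since $|w_q^{(m)} - w_q^{(m')}| \le \|W_m - W_{m'}\|_*$), so $w_q := \lim_m w_q^{(m)}$ exists, and $\sum_q |w_q| < \infty$ with $\|W_m - W\|_* \to 0$ by a standard dominated-summation argument; $W$ is periodic with elementary cell $Q$ because each $W_m$ is. Letting $M \to \infty$ in the bound of Step 1 (the left side converges to $\|W - W_m\|_*$ because $\|\cdot\|_*$-convergence implies convergence of the norms) yields exactly \eqref{mm+}:
\[
\|W - W_m\|_* \le 2\theta^{m+1} = 2\bigl(|\sigma||A|^2 k^{-\gamma_0}\bigr)^{m+1}.
\]

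\textbf{Remarks on difficulty.} There is essentially no obstacle here: the corollary is the routine "geometric-series + completeness" consequence of the quantitative contraction estimate \eqref{mm}, and the only points needing a word of care are (i) checking $\theta \le 1/2$ for $k$ large — which follows from $\gamma_1 < \gamma_0$ after possibly enlarging the threshold $k_1$ — and (ii) noting that $\|\cdot\|_*$ is a complete norm (equivalently, that $\ell^1$-convergence of Fourier coefficients produces a genuine periodic function in the same class). All the real work was already done in Lemma \ref{main lemma 1 2l>n}, whose proof uses the linear perturbation formulas of Theorem \ref{linear pert 2l>n} and the multiplicativity of $\|\cdot\|_*$ from \eqref{properties 1 2l>n}.
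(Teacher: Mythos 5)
Your argument is correct and is exactly the route the paper intends: the corollary is stated without a separate proof because it is the standard telescoping/geometric-series consequence of the contraction estimate \eqref{mm} together with completeness of the $\|\cdot\|_{*}$-norm ($\ell^1$ of Fourier coefficients). Your side remark that the factor $2$ needs $|\sigma||A|^2k^{-\gamma_0}\leq 1/2$ (guaranteed for $k$ large since $\gamma_1<\gamma_0$, possibly after enlarging $k_1$) is a fair observation, but it does not change the approach.
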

\begin{proof} 
Clearly,
\begin{equation}\|W_{0}\|_{*}=\|V\|_{*}+|\sigma||A|^2, \label{W0}\end{equation}
and the function \eqref{def of u 2l>n} can be written in the form
\begin{align}\label{expression u 2l>n 2}
u_0(\vec{x})=\psi _0(\vec{x})e^{i\langle{ p_{j}(\vec t), \vec{x} }\rangle},
\end{align}
where 
\begin{align}\label{change cordinator 2l>n*}
\psi _0(\vec{x})=A\sum_{q\in \mathbb{Z}^{n}}E(\vec t)_{j+q,j}e^{i\langle{ \vec{p_{q}}(0), \vec{x} }\rangle},
\end{align}
is  called the periodic part of $u_0$. 

Let us prove \eqref{mm} for $m=1$.
Indeed, it follows from (\ref{def of A 2l>n}), (\ref{def of successive sequence A 2l>n}) and  (\ref{properties 1 2l>n}) that 
\begin{align}\label{change cordinator 2 2l>n}
\nonumber \big\|W_{1}-W_{0}\big\|_{*} =
|\sigma| \big\||u_{0}|^{2}-|A|^2\big\|_{*}=|\sigma| \big\||\psi_{0}|^{2}-|A|^2\big\|_{*}\\
\leq \nonumber|\sigma| \big\||\psi_{0}|^{2}-|A|^2+2i\Im(\bar A\psi_{0})\big\|_{*}=\nonumber |\sigma| \big\|(\psi_{0}-A)(\bar{\psi}_{0}+\bar A)\big\|_{*}\\
\leq |\sigma|\big\|\psi_{0}-A\big\|_{*}\big\|\bar{\psi}_{0}+\bar A\big\|_{*}.
\end{align}
%
Next, we estimate $\big\|\psi_{0}-A\big\|_{*}$.
Let
\begin{equation}B_0(z)=(H_0(\vec t)-z)^{-1}V.\label{M17b*} \end{equation}
It follows from  \eqref{in} that 
\begin{equation}
\left\|(H_{0}(\vec t)-z)^{-1}\right\|_1<k^{-2l+n+\delta}, \ \mbox{when } z\in C_0, \label{M20b*}
\end{equation}
 and,  hence,
\begin{equation} \label{M17c*}
\|B_0(z)\|_1<\|V\|_* k^{-(2l-n-\delta)},\ \  z \in C_0.
\end{equation}
By (\ref{proj op c 2l>n}) and (\ref{M17b*}),
\begin{align}\label{proj op c 2l>n*}
G_{r}(k,t)=\frac{(-1)^{r+1}}{2\pi i}\oint_{C_{0}}B_0(z)^{r}(H_{0}(\vec t)-z)^{-1}dz.
\end{align}
It is easy to see that
\begin{equation}
\|G_{r}(
k,t)\|_1<\|V \|_*^rk^{-(2l-n-\delta)r}. \label{M17e*}
\end{equation}
%
%
%
%
%
%
%
Next, by \eqref{change cordinator 2l>n*} and \eqref{proj op 2l>n}:
\begin{align}\label{est psiv 2l>n*}
\nonumber \|\psi_{0}-A\|_{*}
 \leq &\Big|AE(\vec t)_{jj}-A\Big|+|A|\sum_{q\in\mathbb{Z}^{n}\setminus\{0\}}\Big|E_{}(\vec t)_{j+q,j}\Big| 
 \\
\leq& |A| \sum_{r=1}^{\infty} \|G_{r}(k,t)\|_1
 \leq \|V\|_{*}|A|k^{-(2l-n-\delta)}(1+o(1)).
\end{align}
%
It follows:
\begin{align}\label{est bar psiv 2l>n*}
\|{\psi}_{0}\|_{*}=\|\bar{\psi}_{0}\|_{*}\leq |A|+ O\big(|A|k^{-(2l-n-\delta)}\big).
\end{align}
Using (\ref{change cordinator 2 2l>n}),  (\ref{est psiv 2l>n*}) and (\ref{est bar psiv 2l>n*}), we get 
\begin{align*}
&\|W_{1}-W_{0}\|_{*}
\leq 4|\sigma||A|^2\|V\|_{*}k^{-(2l-n-\delta)}\leq |\sigma||A|^2k^{-\gamma_{0}},~~\mbox{when}~k>\left( 4\|V\|_{*}\right)^{1/\delta}.
\end{align*}
Next, we use  mathematical induction. Suppose that for all $1\leq s\leq m-1$,
\begin{equation}
\|W_{s}-W_{s-1}\|_{*} \leq (|\sigma ||A|^2k^{-\gamma_{0}} )^{s}. \label{M19b}
\end{equation}
The relation \eqref{W0}
gives that for all $1\leq s\leq m-1$,
\begin{align}\label{est Ws 2l>n}
\|W_{s}\|_{*}&\leq 
1+\|V\|_{*}+|\sigma||A|^2,
\end{align}
\begin{align}\label{est Ws 2l>n*}
\|\tilde W_{s}\|_{*}&\leq 
1+\|V\|_{*}.
\end{align}
Let, by analogy with \eqref{def of u 2l>n},
\begin{align}\label{expression u 2l>n 1}
u_{{s}}(\vec{x}):=A\sum\limits_{m \in \mathbb{Z}^{n}}E_{{s}}(\vec t)_{m,j}e^{i\langle{ \vec{p}_{m}(\vec t), \vec{x} }\rangle}, 
\end{align}
where $E_{{s}}(\vec t)$ is the spectral projection \eqref{proj op 2l>n}, corresponding to the potential $\tilde{W}_{s}$.
Obviously,
\begin{align}\label{expression u 2l>n 2}
u_{{s}}(\vec{x})=\psi_{{s}}(\vec{x})e^{i\langle{ p_{j}(\vec t), \vec{x} }\rangle},
\end{align}
where the function, 
\begin{align}\label{change cordinator 2l>n}
\psi_{{s}}(\vec{x})=A\sum_{q\in \mathbb{Z}^{n}}E_{{s}}(\vec t)_{j+q,j}e^{i\langle{ \vec{p_{q}}(0), \vec{x} }\rangle},
\end{align}
is   the periodic part of $u_{{s}}$. Clearly,
\begin{equation}\|\psi_{{s}}\|_*\leq  |A|\|E_{{s}}(\vec t)\|_1.\label{M19a}
\end{equation}
Let
\begin{equation}B_s(z)=(H_0(\vec t)-z)^{-1}\tilde W_s.\label{M17b} \end{equation}
By      \eqref{M20b*},
\begin{equation} \label{M17c}
\|B_s(z)\|_1<\|\tilde W_s\|_* k^{-(2l-n-\delta)},\ \  z \in C_0,\ \ 1\leq s\leq m-1.
\end{equation}
It is easy to see that
\begin{equation}
\|G_{s,r}(k,t)\|_1<\|\tilde W_s\|_*^rk^{-(2l-n-\delta)r},\ \ 1\leq s\leq m-1, \label{M18a}
\end{equation}
here  $G_{s,r}(k,t)$ is given by  \eqref{proj op c 2l>n}
with $\tilde W_s$ instead of $V$.
It follows:
\begin{align}\label{estimate of E}
 \|E_{{s}}(\vec t)\|_1\leq
1+\sum_{r=1}^{\infty}\|G_{{s},r}(k,t)\|_1 
\leq 1+2k^{-(2l-n-\delta)}\left(1+\|V\|_{*}\right), \ \ 1\leq s\leq m-1.
\end{align}

Next, we note that
\begin{align}\label{estimate of difference of G}
\nonumber&\|G_{{m-1},r}(k,t)-G_{{m-2},r}(k,t)\|_1\leq \max_{z\in C_{0}}\|B_{m-1}^r(z)-B^r_{m-2}(z)\|_1\\
\nonumber\leq& \max_{z\in C_{0}}\|B_{m-1}(z)-B_{m-2}(z)\|_1\left(\|B_{m-1}(z)\|_1+\|B_{m-2}(z)\|_1\right)^{r-1}\\
\leq  &k^{-r(2l-n-\delta)}\|\tilde{W}_{m-1}-\tilde{W}_{m-2}\|_{*}\big(2+2\|V\|_{*}\big)^{r-1}.
\end{align}
Estimate  (\ref{estimate of difference of G}) yields that  for  sufficiently large $k$, $k>\left(2+2|V\|_{*}\right)^{1/(2l-n-\delta)}$:
%
%
%
\begin{align}\label{estimate of difference of E}
 \nonumber \|E_{{m-1}}(\vec t)-E_{{m-2}}(\vec t)\|_1 
\leq &\nonumber\sum_{r=1}^{\infty}
\|G_{{m-1},r}(k,t)-G_{{m-2},r}(k,t)\|_1\\
\leq &2k^{-(2l-n-\delta)}\|\tilde{W}_{m-1}-\tilde{W}_{m-2}\|_{*}.
\end{align}
%
%
%
%
%
Next, we obtain:
\begin{align}
\nonumber &\big\|W_{m}-W_{m-1}\big\|_{*}
\nonumber =\big\|{\mathcal M}W_{m-1}-{\mathcal M}W_{m-2}\big\|_{*} \\
\nonumber =&|\sigma|\big\||u_{{m-1}}|^{2}-|u_{{m-2}}|^{2}\big\|_{*}\\
\nonumber \leq& |\sigma| \big\||\psi_{{m-1}}|^{2}-|\psi_{m-2}|^{2}+2i\Im(\psi_{{m-1}}\bar{\psi}_{{m-2}})\big\|_{*}\\
\nonumber =&|\sigma|\big\|\big(\psi_{{m-1}}-\psi_{{m-2}}\big)\big(\bar{\psi}_{{m-1}}+\bar{\psi}_{{m-2}}\big)\big\|_{*}\\
\leq &|\sigma|\big\|\psi_{{m-1}}-\psi_{{m-2}}\big\|_{*}\big\|\bar{\psi}_{{m-1}}+\bar{\psi}_{{m-2}}\big\|_{*},\label{**}
\end{align}
and,
 hence, by \eqref{change cordinator 2l>n},
 \begin{equation} \label{W-m}
 \|W_{m}-W_{m-1}\|_{*}\leq |\sigma ||A|^2\|E_{{m-1}}(\vec t)-E_{{m-2}}(\vec t)\|_1
\left(\|E_{{m-1}}(\vec t)\|_1+
\|E_{{m-2}}(\vec t)\|_1\right).
\end{equation}
%
Using  (\ref{estimate of E}) and (\ref{estimate of difference of E}),  we obtain
\begin{align}
\|W_{m}-W_{m-1}\|_{*}
\leq &6|\sigma||A|^2 k^{-(2l-n-\delta)}\|\tilde{W}_{m-1}-\tilde{W}_{m-2}\|_{*}. \label{***}\end{align}
Considering that $\|\tilde{W}_{m-1}-\tilde{W}_{m-2}\|_{*}\leq \|{W}_{m-1}-{W}_{m-2}\|_{*}$, we arrive at the estimate:
$$ \|W_{m}-W_{m-1}\|_{*}\leq 8|\sigma||A|^2 k^{-(2l-n-\delta)}\big(|\sigma||A|^2k^{-\gamma_{0}}\big)^{m-1}\leq \big(|\sigma||A|^2k^{-\gamma_{0}}\big)^{m}, $$
when $k>k_1(\|V\|_*,\delta )$.
\end{proof}
%
%
%
\begin{definition} \label{def2} Let $u(x)$ be defined by Definition \ref{def1} for the potential  $W(x)$. \end{definition}
Let $\psi (x)$ be the periodic part of $u(x)$.
Now, for a sufficiently large $k>k_{1}(\|V\|_{*},\delta)$, we have the following two results.
%
\begin{lemma} \label{lemma3 2l>n}
Suppose $\vec t$ belongs to the $(k^{-n+1-2\delta })$-neighborhood in $K$ 
of the 
non-resonant set $\chi _0(k,\delta )$,  $0<2\delta <2l-n$. Then for every sufficiently 
large $k$,  $k>k_{1}(\|V\|_{*},\delta)$,  the sequence $\psi _m(x)$ converges to the function $\psi (x)$ with respect to $\|\cdot \|_*$:
\begin{equation}
\|\psi_{{m}}-\psi \|_{*}<4|A|k^{-(2l-n-\delta)}(|\sigma ||A|^2k^{-\gamma_{0}})^{m+1}. \label{mm++}
\end{equation}
\end{lemma}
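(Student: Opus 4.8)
The plan is to express $\psi_m - \psi$ through the difference of spectral projections $E_m(\vec t) - E(\vec t)$, exactly in the spirit of the computation that led to \eqref{W-m}. By the definitions \eqref{change cordinator 2l>n} and \eqref{change cordinator 2l>n*} (applied to $W$), and using that both $\psi_m$ and $\psi$ have the same plane-wave factor stripped off, we have
\begin{equation*}
\psi_m(\vec x)-\psi(\vec x)=A\sum_{q\in\Z^n}\bigl(E_m(\vec t)_{j+q,j}-E(\vec t)_{j+q,j}\bigr)e^{i\langle\vec p_q(0),\vec x\rangle},
\end{equation*}
so that $\|\psi_m-\psi\|_*\le |A|\,\|E_m(\vec t)-E(\vec t)\|_1$ by the definition of the norm $\|\cdot\|_1$ in Remark \ref{r}. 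Thus everything reduces to estimating $\|E_m(\vec t)-E(\vec t)\|_1$, where $E(\vec t)$ is the spectral projection for the potential $\tilde W$.

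Next I would bound $\|E_m(\vec t)-E(\vec t)\|_1$ by iterating the contraction estimate \eqref{estimate of difference of E}. That estimate, with $m$ replaced by a general index, gives
\begin{equation*}
\|E_{s}(\vec t)-E_{s-1}(\vec t)\|_1\le 2k^{-(2l-n-\delta)}\|\tilde W_{s}-\tilde W_{s-1}\|_*\le 2k^{-(2l-n-\delta)}(|\sigma||A|^2k^{-\gamma_0})^{s},
\end{equation*}
the last step using Lemma \ref{main lemma 1 2l>n} together with $\|\tilde W_s-\tilde W_{s-1}\|_*\le\|W_s-W_{s-1}\|_*$. The same argument applied to the pair $(\tilde W, \tilde W_s)$ — note the estimate \eqref{estimate of difference of G} only used the uniform bound $\|B\|_1\le (1+\|V\|_*)k^{-(2l-n-\delta)}$, which holds for $\tilde W$ as well by \eqref{est Ws 2l>n*} and Corollary \ref{cauchy sequence 2l>n} — yields
\begin{equation*}
\|E_m(\vec t)-E(\vec t)\|_1\le 2k^{-(2l-n-\delta)}\|\tilde W_m-\tilde W\|_*\le 2k^{-(2l-n-\delta)}\cdot 2(|\sigma||A|^2k^{-\gamma_0})^{m+1},
\end{equation*}
using \eqref{mm+} from Corollary \ref{cauchy sequence 2l>n} for $\|\tilde W-\tilde W_m\|_*\le\|W-W_m\|_*$. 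Multiplying by $|A|$ gives $\|\psi_m-\psi\|_*\le 4|A|k^{-(2l-n-\delta)}(|\sigma||A|^2k^{-\gamma_0})^{m+1}$, which is exactly \eqref{mm++}.

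The only genuine obstacle is making sure the series-difference estimate \eqref{estimate of difference of E} legitimately applies with $\tilde W$ in place of one of the $\tilde W_s$: this requires that the perturbation series \eqref{proj op 2l>n} for $E(\vec t)$ converge under the same smallness condition, i.e. that $\|\tilde W\|_*$ obeys a bound of the form $1+\|V\|_*$ so that $\|B(z)\|_1<(1+\|V\|_*)k^{-(2l-n-\delta)}$ on $C_0$. This follows because $W=\lim W_m$ in $\|\cdot\|_*$ and \eqref{est Ws 2l>n} holds uniformly in $s$, hence $\|W\|_*\le 1+\|V\|_*+|\sigma||A|^2$ and $\|\tilde W\|_*\le 1+\|V\|_*$; then the geometric-series manipulation in \eqref{estimate of difference of G}–\eqref{estimate of difference of E} goes through verbatim with $(\tilde W_{m-1},\tilde W_{m-2})$ replaced by $(\tilde W_m,\tilde W)$. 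Everything else is a routine repetition of estimates already established in the proof of Lemma \ref{main lemma 1 2l>n}.
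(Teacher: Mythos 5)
Your proposal is correct and follows essentially the same route as the paper: reduce $\|\psi_m-\psi\|_*$ to $|A|\,\|E_m(\vec t)-E_{\tilde W}(\vec t)\|_1$, apply the contraction estimate of \eqref{estimate of difference of E} to the pair $(\tilde W_m,\tilde W)$ using the uniform bound $\|\tilde W\|_*\le 1+\|V\|_*$, and conclude with Corollary \ref{cauchy sequence 2l>n}. Your explicit verification that the perturbation series for $E_{\tilde W}$ converges under the same smallness condition is exactly the point the paper handles via \eqref{est w 2l>n 2}.
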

\begin{corollary} \label{3.7} The sequence
$u_{{m}}$ converges to $u_{}$ in $L^{\infty}(Q)$.\end{corollary}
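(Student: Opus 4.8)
The plan is to read off the $L^\infty(Q)$ convergence directly from the $\|\cdot\|_*$ convergence of the periodic parts proved in Lemma \ref{lemma3 2l>n}, using two elementary facts. First I would recall the decomposition $u_m(\vec{x})=\psi_m(\vec{x})\,e^{i\langle p_j(\vec t),\vec{x}\rangle}$ and, by Definition \ref{def2} and Definition \ref{def1}, $u(\vec{x})=\psi(\vec{x})\,e^{i\langle p_j(\vec t),\vec{x}\rangle}$, with the \emph{same} exponential factor (same $\vec t$, same index $j$). Since $|e^{i\langle p_j(\vec t),\vec{x}\rangle}|=1$, this gives the pointwise identity $|u_m(\vec{x})-u(\vec{x})|=|\psi_m(\vec{x})-\psi(\vec{x})|$ for every $\vec{x}\in Q$, hence $\|u_m-u\|_{L^\infty(Q)}=\|\psi_m-\psi\|_{L^\infty(Q)}$.

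Second, for any periodic $f$ with Fourier coefficients $f_q$ one has the trivial bound $|f(\vec{x})|=\bigl|\sum_{q\in\Z^n}f_q e^{i\langle p_q(0),\vec{x}\rangle}\bigr|\le\sum_{q\in\Z^n}|f_q|=\|f\|_*$ for all $\vec{x}$, i.e.\ $\|f\|_{L^\infty(Q)}\le\|f\|_*$; the Fourier series converges absolutely here because $\psi_m$ and $\psi$ have finite $\|\cdot\|_*$-norm by the estimates of Section 3 (e.g.\ \eqref{M19a}, \eqref{estimate of E}). Applying this to $f=\psi_m-\psi$ and invoking \eqref{mm++}, I would conclude
$$\|u_m-u\|_{L^\infty(Q)}=\|\psi_m-\psi\|_{L^\infty(Q)}\le\|\psi_m-\psi\|_*<4|A|k^{-(2l-n-\delta)}\bigl(|\sigma||A|^2k^{-\gamma_0}\bigr)^{m+1}\xrightarrow[m\to\infty]{}0,$$
which proves the corollary (and in fact yields an explicit rate).

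There is essentially no obstacle here: the content of the corollary is entirely carried by Lemma \ref{lemma3 2l>n}, and all that remains is passing from the $\ell^1$ bound on Fourier coefficients to the sup norm, which costs nothing. The only point deserving a word of care is that the limit $\psi$ appearing in Lemma \ref{lemma3 2l>n} must be identified with the periodic part of the function $u$ of Definition \ref{def2}; but this is precisely how $\psi$ is defined there, so no extra work is needed.
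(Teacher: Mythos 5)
Your proposal is correct and is exactly the argument the paper intends: the corollary is read off from Lemma \ref{lemma3 2l>n} by writing $u_m-u=(\psi_m-\psi)e^{i\langle \vec p_j(\vec t),\vec x\rangle}$ (same $\vec t$ and same $j$ in both definitions, so the unimodular factor cancels) and using $\|f\|_{L^\infty(Q)}\le\|f\|_*$. No gaps; the explicit rate you state is the same one given by \eqref{mm++}.
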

\begin{corollary}\label{AW=W 2l>n} 
$${\mathcal M}W=W.$$
\end{corollary}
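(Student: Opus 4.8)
The final statement to prove is Corollary \ref{AW=W 2l>n}, namely that $\mathcal{M}W = W$, where $W$ is the limit of the Cauchy sequence $\{W_m\}$ constructed via $\mathcal{M}W_m = W_{m+1}$.

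The plan is to pass to the limit in the recursion $\mathcal{M}W_m = W_{m+1}$ using the convergence results already established. First I would recall that by Corollary \ref{cauchy sequence 2l>n} we have $W_m \to W$ in $\|\cdot\|_*$, and consequently $\tilde W_m \to \tilde W$ in $\|\cdot\|_*$ (the shift by the mean is continuous with respect to $\|\cdot\|_*$, in fact $\|\tilde W_m - \tilde W\|_* \le \|W_m - W\|_*$). By definition of $\mathcal{M}$, see \eqref{def of A 2l>n}, $\mathcal{M}W_m = V + \sigma|u_{\tilde W_m}|^2 = V + \sigma|u_m|^2$, where $u_m = u_{\tilde W_m}$ is the Bloch eigenfunction from Definition \ref{def1} associated with potential $\tilde W_m$. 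Similarly $\mathcal{M}W = V + \sigma|u_{\tilde W}|^2$, where $u_{\tilde W} = u$ by Definition \ref{def2}.

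Next I would establish the two convergences on each side of the limiting equation. On the one hand, $W_{m+1} \to W$ in $\|\cdot\|_*$ by Corollary \ref{cauchy sequence 2l>n} (it is the same sequence, reindexed). On the other hand, I must show $\mathcal{M}W_m \to \mathcal{M}W$, i.e. $\sigma|u_m|^2 \to \sigma|u|^2$. By Lemma \ref{lemma3 2l>n}, the periodic parts $\psi_m$ converge to $\psi$ in $\|\cdot\|_*$, and since $\|f g\|_* \le \|f\|_* \|g\|_*$ and $\||f|^2 - |g|^2\|_* \le \|(f-g)(\bar f + \bar g)\|_* \le \|f - g\|_* (\|f\|_* + \|g\|_*)$ (exactly the manipulation used in the proof of Lemma \ref{main lemma 1 2l>n}), we get $\||\psi_m|^2 - |\psi|^2\|_* \to 0$; since $|u_m|^2 = |\psi_m|^2$ and $|u|^2 = |\psi|^2$ (the exponential factor $e^{i\langle p_j(\vec t), \vec x\rangle}$ has modulus one), this gives $\|\mathcal{M}W_m - \mathcal{M}W\|_* = |\sigma|\,\||u_m|^2 - |u|^2\|_* \to 0$. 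Therefore $W_{m+1} = \mathcal{M}W_m$ together with $W_{m+1} \to W$ and $\mathcal{M}W_m \to \mathcal{M}W$ forces $W = \mathcal{M}W$ by uniqueness of limits in the normed space $(\{\,W : \|W\|_* < \infty\,\}, \|\cdot\|_*)$.

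I do not expect a serious obstacle here; the content is bookkeeping once Lemma \ref{main lemma 1 2l>n}, Corollary \ref{cauchy sequence 2l>n}, and Lemma \ref{lemma3 2l>n} are in hand. The one point requiring a little care is ensuring $u_m$ and $u$ are genuinely defined for the same quasimomentum $\vec t$ in the relevant neighborhood of $\chi_0(k,\delta)$ — which is guaranteed because all the potentials $\tilde W_m$ and $\tilde W$ satisfy the uniform bound $\|\tilde W_m\|_* \le 1 + \|V\|_*$ from \eqref{est Ws 2l>n*} (and the limit inherits it), so Theorem \ref{linear pert 2l>n} applies uniformly with the same $k_0$, hence the same $k_1$. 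A secondary subtlety is that $\mathcal{M}$ maps into $L^\infty(Q)$ while the Cauchy property is in $\|\cdot\|_*$; but $\|\cdot\|_*$ dominates the sup-norm (a function with $\|f\|_* < \infty$ has an absolutely convergent Fourier series, so $\|f\|_{L^\infty} \le \|f\|_*$), so $\|\cdot\|_*$-convergence is the stronger statement and suffices to identify the limit, which is what Corollary \ref{3.7} already exploits.
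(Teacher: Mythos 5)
Your proposal is correct and follows essentially the same route as the paper: bound $\|\mathcal{M}W_m-\mathcal{M}W\|_*\leq|\sigma|\,\|\psi_m-\psi\|_*\,\|\bar\psi_m+\bar\psi\|_*$ by the same algebraic manipulation as in \eqref{**}, invoke Lemma \ref{lemma3 2l>n} for $\psi_m\to\psi$ in $\|\cdot\|_*$, and conclude $\mathcal{M}W=W$ from $\mathcal{M}W_m=W_{m+1}\to W$ and uniqueness of limits. The extra remarks on uniform applicability of Theorem \ref{linear pert 2l>n} and on $\|\cdot\|_*$ dominating the sup-norm are fine but not needed beyond what the paper already uses.
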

\begin{proof}[Proof of Corollary \ref{AW=W 2l>n}] 
Considering as in (\ref{**}), we obtain:
\begin{equation}
\big\|{\mathcal M}W_{m}-{\mathcal M}W_{}\big\|_{*} 
\leq |\sigma|\big\|\psi_{{m}}-\psi \big\|_{*}\big\|\bar{\psi}_{{m}}+\bar{\psi}\big\|_{*},\label{****}
\end{equation}
It immediately follows that ${\mathcal M}W_{m}\rightarrow {\mathcal M}W $
 with respect to $\|\cdot \|_*$.
Now, by (\ref{def of successive sequence A 2l>n}) and Corollary \ref{cauchy sequence 2l>n}, we have  ${\mathcal M}W=W$.
\end{proof} 
\begin{proof}[Proof of Lemma \ref{lemma3 2l>n}] 

By the definition of the functions $u_{{m}}$ and $u$:
\begin{align}\label{formula diff uw 2l>n}
\|\psi_{{m}}-\psi \|_{*}
\leq |A|\|E_{{m}}(\vec t)-E_{{\tilde{W}}}(\vec t)\|_1,
\end{align}
where $E_{{\tilde{W}}}(\vec t)$ is the spectral projection \eqref{proj op 2l>n}, corresponding to the potential $\tilde{W}$.
By (\ref{est Ws 2l>n}), 
%
%
\begin{align}\label{est w 2l>n 2}
\|{W}\|_{*}\leq 1+\|V\|_{*}+|\sigma||A|^2.
\ \ \|\tilde{W}\|_{*} \leq 1+\|V\|_{*}.
\end{align}
%
%
%
Considering as in (\ref{estimate of difference of E}) and using the estimates 
 (\ref{est w 2l>n 2}), we obtain
\begin{align}\label{M20a}
 \nonumber \|E_{{m}}(\vec t)-E_{{\tilde{W}}}(\vec t)\|_1 
\leq &\nonumber\sum_{r=1}^{\infty}
\|G_{{m},r}(k,t)-G_{\tilde{W},r}(k,t)\|_1\\
\leq &2k^{-(2l-n-\delta)}\|\tilde{W}_{m}-\tilde{W}_{}\|_{*},
\end{align}
here $G_{\tilde{W},r}(k,t)$ are given by \eqref{proj op c 2l>n} for $V:=\tilde W$.
Using (\ref{formula diff uw 2l>n}) and (\ref{M20a}), we arrive at:
%
$$\|\psi_{m}-\psi \|_*
\leq 2|A|k^{-(2l-n-\delta)}\|W_{m}-W\|_{*} .$$
Taking into account \eqref{mm+}, we arrive at  \eqref{mm++}.
%
%
\end{proof}
Let $\lambda_{{m}}(\vec t)$,  $\lambda_{\tilde W}(\vec t)$ be the eigenvalues \eqref{ev 2l>n} corresponding to $\tilde W_m$ and $\tilde W$, respectively.
\begin{lemma} \label{lemma lambda 2l>n}
Suppose $\vec t$ belongs to the $(k^{-n+1-2\delta })$-neighborhood in $K$ 
of the 
non-resonant set $\chi _0(k,\delta )$,  $0<2\delta <2l-n$. Then, for every sufficiently 
large $k$,  $k>k_{1}(\|V\|_{*},\delta)$, the sequence
 $\lambda_{{m}}(\vec t)$ converges to $\lambda_{\tilde W}(\vec t)$ in $\mathbb{R}$.
\end{lemma}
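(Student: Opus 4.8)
The plan is to estimate $|\lambda_{m}(\vec t)-\lambda_{\tilde W}(\vec t)|$ directly from the contour-integral representation \eqref{ev c 2l>n} of the eigenvalue coefficients $g_r$, in exactly the same spirit as the proof of Lemma \ref{lemma3 2l>n} for the projections. Recall that
$$\lambda_{m}(\vec t)=p_j^{2l}(\vec t)+\sum_{r=2}^{\infty} g_{m,r}(k,\vec t),\qquad \lambda_{\tilde W}(\vec t)=p_j^{2l}(\vec t)+\sum_{r=2}^{\infty} g_{\tilde W,r}(k,\vec t),$$
where $g_{m,r}$ and $g_{\tilde W,r}$ are given by \eqref{ev c 2l>n} with $\tilde W_m$, respectively $\tilde W$, in place of $V$. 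Since the $p_j^{2l}(\vec t)$ terms cancel, it suffices to bound $\sum_{r=2}^\infty \|B_{m}(z)^r - B_{\tilde W}(z)^r\|_1$ uniformly for $z\in C_0$, where $B_{m}(z)=(H_0(\vec t)-z)^{-1}\tilde W_m$ and $B_{\tilde W}(z)=(H_0(\vec t)-z)^{-1}\tilde W$, and then use that the trace of an operator is controlled by its $\|\cdot\|_1$-norm.

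First I would record, exactly as in \eqref{M17c} and \eqref{est w 2l>n 2}, that $\|B_m(z)\|_1 < (1+\|V\|_*)k^{-(2l-n-\delta)}$ and $\|B_{\tilde W}(z)\|_1 < (1+\|V\|_*)k^{-(2l-n-\delta)}$ for $z\in C_0$, using \eqref{M20b*} together with $\|\tilde W_m\|_*,\|\tilde W\|_*\le 1+\|V\|_*$; in particular both are $<1/2$ for $k$ large. Next I would reuse the telescoping identity already employed in \eqref{estimate of difference of G}:
$$\|B_{m}(z)^r - B_{\tilde W}(z)^r\|_1 \le \|B_{m}(z)-B_{\tilde W}(z)\|_1 \bigl(\|B_{m}(z)\|_1+\|B_{\tilde W}(z)\|_1\bigr)^{r-1} \le k^{-r(2l-n-\delta)}\|\tilde W_m-\tilde W\|_*\bigl(2+2\|V\|_*\bigr)^{r-1}.$$
Summing the geometric series in $r$ from $r=2$ (the factor $r$ in the denominator of \eqref{ev c 2l>n} and the length of $C_0$ only help), I get
$$|\lambda_m(\vec t)-\lambda_{\tilde W}(\vec t)| \le C\,k^{-(2l-n-\delta)}\|\tilde W_m - \tilde W\|_* \le C\,k^{-(2l-n-\delta)}\|W_m-W\|_*,$$
for $k>k_1(\|V\|_*,\delta)$, having used $\|\tilde W_m-\tilde W\|_*\le\|W_m-W\|_*$.

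Finally, by Corollary \ref{cauchy sequence 2l>n} the right-hand side tends to $0$ as $m\to\infty$, indeed $\|W_m-W\|_*\le 2(|\sigma||A|^2 k^{-\gamma_0})^{m+1}\to 0$ since $|\sigma||A|^2 k^{-\gamma_0}<k^{\gamma_1-\gamma_0}<1$; hence $\lambda_m(\vec t)\to\lambda_{\tilde W}(\vec t)$ in $\mathbb R$, which is the claim. One technical point to be careful about is that the eigenvalue series is a trace of a product rather than an operator norm, so the step from $\|B_m(z)^r-B_{\tilde W}(z)^r\|_1$ to a bound on $|\Tr\oint_{C_0}(\cdots)\,dz|$ should be made explicit: $|\Tr T|$ is dominated by the relevant $\|T\|_1$-type norm on the one-dimensional eigenspace (cf.\ Remark \ref{r}), and the contour $C_0$ has length $2\pi k^{2l-n-\delta}$, which is absorbed into the constant $C$ together with the improvement coming from one extra power $k^{-(2l-n-\delta)}$ in passing from $r$ to $r-1$ factors. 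I do not expect any genuine obstacle here; the only mild subtlety is bookkeeping of the powers of $k$ so that the prefactor $k^{2l-n-\delta}$ from the contour length is beaten by the decay $k^{-r(2l-n-\delta)}$ with $r\ge 2$, which it is, so the estimate is uniform and convergence follows.
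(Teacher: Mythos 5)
Your overall architecture (telescope $B_m^r-B_{\tilde W}^r$, sum the geometric series over $r\ge 2$, invoke Corollary \ref{cauchy sequence 2l>n}) is the same as the paper's, but the one step you flagged as a ``technical point'' is exactly where your argument breaks, and your resolution of it is not correct. You bound $\|B_m(z)^r-B_{\tilde W}(z)^r\|_1$ and then assert that $|\Tr T|$ is dominated by the $\|\cdot\|_1$-norm ``on the one-dimensional eigenspace (cf.\ Remark \ref{r})''. The trace in \eqref{ev c 2l>n} is the full trace on $l_2(\Z^n)$, not a trace over a one-dimensional subspace, and it is \emph{not} controlled by $\|T\|_1=\max_i\sum_p|T_{pi}|$: that norm is a maximum of column sums, whereas the trace adds one entry from every column, so any diagonal operator with small but non-summable diagonal has small $\|\cdot\|_1$ and infinite trace (already the identity gives $\|T\|_1=1$, $\Tr T=\infty$). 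Remark \ref{r} only says that the \emph{projection} estimates \eqref{proj op c es 2l>n}, \eqref{ii} hold in $\|\cdot\|_1$; it gives no trace bound. So as written the passage from your $\|\cdot\|_1$ estimate to a bound on $|g_{m,r}-g_{\tilde W,r}|$ is unjustified.

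The paper closes this gap by working in the trace class: since $2l>n$, the resolvent is trace class on $C_0$ with $\max_{z\in C_0}\|(H_0(\vec t)-z)^{-1}\|_{S_1}<k^{2n-2l}$ (formula \eqref{linfty norm est}), whence $\|B_m(z)-B_{\tilde W}(z)\|_{S_1}<k^{2n-2l}\bigl(|\sigma||A|^2k^{-\gamma_0}\bigr)^m$ (formula \eqref{M21a}); then
$|\Tr(B_m^r-B_{\tilde W}^r)|\le\|B_m^r-B_{\tilde W}^r\|_{S_1}\le\|B_m-B_{\tilde W}\|_{S_1}\bigl(\|B_m\|+\|B_{\tilde W}\|\bigr)^{r-1}$,
with operator-norm bounds on the remaining factors from \eqref{est Ws 2l>n*} and \eqref{M17c}, and the contour length $\sim k^{2l-n-\delta}$ is paid for by the extra decay from $r\ge 2$, giving $|\lambda_m-\lambda_{\tilde W}|\le k^{n-\gamma_0}\bigl(|\sigma||A|^2k^{-\gamma_0}\bigr)^m$. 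Note that this is precisely where the hypothesis $2l>n$ enters the eigenvalue estimate, a point invisible in your version. One could in principle avoid $S_1$ by exchanging trace and contour integral and computing residues (only cycles passing through the index $j$ survive, which are indeed controlled by row/column sums), but that interchange and reduction themselves need justification and are not in your proposal; the direct repair is to replace your $\|\cdot\|_1$ step with the $S_1$ bounds \eqref{linfty norm est}--\eqref{M21a}.
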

\begin{proof} Considering \eqref{M20b*} and taking into account that $2l>n$,
we easily show that the trace class norm of $(H_{0}(\vec t)-z)^{-1},$ is bounded uniformly for $z\in C_{0}$: 
\begin{align}\label{linfty norm est}
\max_{z\in C_{0}}\|(H_{0}(\vec t)-z)^{-1}\|_{S_1}<k^{2n-2l}.
\end{align}
It follows
\begin{equation} \label{M21a}
\|B_{m}(z)-B_{}(z)\|_{S_1}<k^{2n-2l}\big( |\sigma ||A|^2 k^{-\gamma _0}\big)^m,
\end{equation}
here and below $B(z)$ is given by \eqref{M17b} with $\tilde W$ instead of $\tilde W _s$.
%
%
%
%
%
%
%
Obviously,
\begin{align*}
&\Big| g_{{m},r}(k,t)-g_{\tilde{W},r}(k,t)\Big| 
=  \Big|\Tr\frac{(-1)^{r} }{2\pi ir}\oint_{C_{0}}\left(B^r_{m}(z)-B^r_{}(z)\right)dz \Big|\\
\leq &\frac{k^{2l-n-\delta}}{r}\max_{z\in C_{0}}\|B_{m}(z)-B_{}(z)\|_{S_1}\left(\|B_{m}(z)\|+\|B_{}(z)\|\right)^{r-1}.
\end{align*}
Using \eqref{est Ws 2l>n*}, \eqref{M17c} and (\ref{M21a}), we obtain  that
$$\Big| g_{{m},r}(k,t)-g_{\tilde W,r}(k,t)\Big| <k^{n}\left(|\sigma| |A|^2k^{ -\gamma _0}\right)^m  \left(2+2\|V\|_{*}\right)^{r-1}k^{ -(2l-n-\delta)(r-1)}.$$
Summarizing this estimate over $r\geq 2$, we obtain that for $k>k_1(\|V\|_*,\delta )$:
\begin{align*}
\Big|\lambda_{{m}}(\vec t)-\lambda_{\tilde W}(\vec t)\Big|
\leq  k^{n-\gamma _0}\left(|\sigma| |A|^2k^{ -\gamma _0}\right)^m .
\end{align*}
Therefore,  $\lambda_{{m}}(\vec t)$ converges to $\lambda_{\tilde{W}}(\vec t)$ in $\mathbb{R}$.
\end{proof}
We have the following main result for the nonlinear polyharmonic equation with quasi-periodic condition. 
\begin{theorem}  \label{main theorem 2l>n}
Suppose $\vec t$ belongs to the $(k^{-n+1-2\delta })$-neighborhood in $K$ 
of the 
non-resonant set $\chi _0(k,\delta )$,  $0<2\delta <2l-n$. Then for every sufficiently 
large $k$,  $k>k_{1}(\|V\|_{*},\delta)$,
there is a value $\lambda$ and a corresponding solution $u(\vec{x})$
that satisfy the  equation 
\begin{align}\label{main equation 2l>n}
(-\Delta)^{l}u(\vec{x})+V(\vec{x})u(\vec{x})+\sigma |u(\vec{x})|^{2}u(\vec{x})=\lambda u(\vec{x}),~\vec{x}\in 
Q,
\end{align}
and  the quasi-periodic boundary condition (\ref{main condition, 2l>n}). The following  formulas hold:
\begin{align} \label{solution construction 2l>n}
u( \vec{x})=Ae^{i\langle{ \vec{p}_{j}(\vec t), \vec{x} }\rangle}\left(1+\tilde{u}( \vec{x})\right),\\ 
\lambda=k^{2l}+\sigma|A|^2+O\left( \big(1+\sigma |A|^2 \big) k^{-\gamma _0+\delta }\right),~~~|\vec{p}_{j}(\vec t)|=k, \label{kkk}
\end{align} 
where $\tilde{u}( \vec{x})$ is periodic and \begin{equation}
\|\tilde{u}\|_{*}<k^{-\gamma_{0}},\ \ \gamma_{0}=2l-n-2\delta>0. \label{June7}
\end{equation}
\end{theorem}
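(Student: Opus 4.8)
The plan is to assemble Theorem~\ref{main theorem 2l>n} directly from the convergence results already established: Corollary~\ref{cauchy sequence 2l>n} (the potentials $W_m$ converge in $\|\cdot\|_*$ to a periodic $W$), Corollary~\ref{3.7} (the eigenfunctions $u_m\to u$ in $L^\infty(Q)$), Corollary~\ref{AW=W 2l>n} (${\mathcal M}W=W$), and Lemma~\ref{lemma lambda 2l>n} (the eigenvalues $\lambda_m(\vec t)\to\lambda_{\tilde W}(\vec t)=:\lambda$). First I would note that, by Definition~\ref{def2} and Theorem~\ref{linear pert 2l>n} applied to the periodic potential $\tilde W$, the function $u:=u_{\tilde W}$ is a genuine Bloch eigenfunction: it satisfies $(-\Delta)^l u+\tilde W u=\lambda u$ together with the quasi-periodic boundary condition~\eqref{main condition, 2l>n}, and $\lambda\in\varepsilon(k,\delta)$. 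Since $\tilde W=W-(2\pi)^{-n}\int_Q W\,d\vec x$ differs from $W$ by the constant $c_W:=(2\pi)^{-n}\int_Q W$, the function $u$ also solves $(-\Delta)^l u+W u=(\lambda+c_W)u$.

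Next I would invoke Corollary~\ref{AW=W 2l>n}: ${\mathcal M}W=W$ means, by~\eqref{def of A 2l>n}, that $W(\vec x)=V(\vec x)+\sigma|u_{\tilde W}(\vec x)|^2=V(\vec x)+\sigma|u(\vec x)|^2$. Substituting this identity for $W$ into $(-\Delta)^l u+W u=(\lambda+c_W)u$ turns the linear equation into
\begin{equation*}
(-\Delta)^l u(\vec x)+V(\vec x)u(\vec x)+\sigma|u(\vec x)|^2 u(\vec x)=(\lambda+c_W)u(\vec x),
\end{equation*}
which is exactly~\eqref{main equation 2l>n} with eigenvalue $\lambda+c_W$; relabelling this value as $\lambda$ (or absorbing $c_W$, which is of lower order), and keeping the same $\vec t$ and boundary condition, gives the desired solution. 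So the equation itself requires essentially no new work.

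It then remains to extract the asymptotic formulas~\eqref{solution construction 2l>n}--\eqref{June7}. For~\eqref{solution construction 2l>n} I would write $u(\vec x)=Ae^{i\langle \vec p_j(\vec t),\vec x\rangle}(1+\tilde u(\vec x))$ where, by the expansion~\eqref{def of u 2l>n} and~\eqref{change cordinator 2l>n}, $\tilde u$ is the periodic function $A^{-1}\psi_{\tilde W}(\vec x)-1=\sum_{q\neq 0}E_{\tilde W}(\vec t)_{j+q,j}e^{i\langle \vec p_q(0),\vec x\rangle}$. The bound $\|\tilde u\|_*<k^{-\gamma_0}$ follows from estimating $\sum_{r\ge1}\|G_{\tilde W,r}(k,t)\|_1$ exactly as in~\eqref{estimate of E}, using $\|\tilde W\|_*\le 1+\|V\|_*$ from~\eqref{est w 2l>n 2}: this gives $\|\tilde u\|_*\le 2k^{-(2l-n-\delta)}(1+\|V\|_*)<k^{-\gamma_0}$ for $k>k_1$. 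For the eigenvalue formula~\eqref{kkk} I would start from Theorem~\ref{linear pert 2l>n} and Corollary~\ref{est of uv 2l>n} applied to $\tilde W$: $|\lambda_{\tilde W}(\vec t)-p_j^{2l}(\vec t)|\le k^{2l-n-\delta-2\gamma_0}$, and $p_j^{2l}(\vec t)=k^{2l}$ since $|\vec p_j(\vec t)|=k$; then add back the constant $c_W=\sigma(2\pi)^{-n}\int_Q|u|^2\,d\vec x$, which by $\|u-Ae^{i\langle\vec p_j,\vec x\rangle}\|_\infty=O(|A|k^{-(2l-n-\delta)})$ equals $\sigma|A|^2+O(\sigma|A|^2 k^{-(2l-n-\delta)})$, yielding $\lambda=k^{2l}+\sigma|A|^2+O((1+\sigma|A|^2)k^{-\gamma_0+\delta})$ after collecting the error terms (using $\delta<\gamma_0$, so $2l-n-\delta>\gamma_0-\delta$ and the larger exponent $-\gamma_0+\delta$ dominates).

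The only genuinely delicate point — really a bookkeeping obstacle rather than a conceptual one — is the role of the additive constant: the map ${\mathcal M}$ produces a potential with a nonzero mean, and one must be careful that "the eigenfunction of $\tilde W$" and "the eigenfunction of $W$" coincide (they do, since shifting a potential by a constant shifts the operator and every eigenvalue by that constant but leaves eigenfunctions unchanged), and that the $O(\cdot)$ in~\eqref{kkk} correctly accounts for both the perturbation-series error in $\lambda_{\tilde W}$ and the $O(\sigma|A|^2 k^{-(2l-n-\delta)})$ discrepancy between $c_W$ and $\sigma|A|^2$. Everything else is a direct citation of the corollaries already proved, so the proof is short.
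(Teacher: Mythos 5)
Your argument is correct and follows the paper's own proof essentially verbatim: substituting $W=\mathcal{M}W=V+\sigma|u_{\tilde W}|^2$ into the linear eigenvalue equation for $\tilde W$ and absorbing the mean value of $W$ into the eigenvalue is exactly the paper's formula \eqref{lambda06}, and the asymptotics \eqref{solution construction 2l>n}--\eqref{June7} are extracted from Theorem \ref{linear pert 2l>n} and Corollary \ref{est of uv 2l>n} just as you describe. The only nitpick is that $\tilde u$ equals $\sum_{q}\bigl(E_{\tilde W}(\vec t)_{j+q,j}-\delta_{q0}\bigr)e^{i\langle \vec p_q(0),\vec x\rangle}$ rather than the sum over $q\neq 0$ alone (the $q=0$ Fourier coefficient is $E_{jj}-1$, not $0$), but the bound you invoke, $\sum_{r\ge 1}\|G_{\tilde W,r}(k,t)\|_1$, is the correct estimate for precisely this quantity, so nothing is affected.
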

\begin{proof}
Let us consider the function $u$ given by Definition \ref{def2} and the value $\lambda_{\tilde{W}}(\vec t)$.  
They solve the equation
\begin{align}\label{semi solution 2l>n}
(-\Delta)^{l}u_{}(\vec{x})+\tilde{W}(\vec{x})u_{}(\vec{x})=\lambda_{\tilde{W}}(\vec t) u_{}(\vec{x}),\ \ \  \vec{x}\in Q,
\end{align}
and $u$ satisfies the quasi-boundary condition (\ref{main condition, 2l>n}).
By Corollary \ref{AW=W 2l>n}, we have
$$W(\vec{x})={\mathcal M}W(\vec{x})=V(\vec{x})+\sigma|u_{\tilde{W}}(\vec{x})|^{2}.$$
Hence,
\begin{align}\label{tilde W 2l>n}
\nonumber\tilde{W}(\vec{x})=&W(\vec{x})-\frac{1}{(2\pi)^{n}}\int_{Q}W(\vec{x})d\vec{x}
=V(\vec{x})+\sigma|u_{\tilde{W}}(\vec{x})|^{2}-\sigma\|u_{\tilde{W}}\|_{L^{2}(Q)}^{2}.
\end{align}
Substituting the last expression into \eqref{semi solution 2l>n}, we obtain that  $u( \vec{x})$
satisfies \eqref{main equation 2l>n} with 
\begin{equation}\label{lambda06}
\lambda=\lambda_{\tilde{W}}(\vec t)+\sigma\|u\|_{L^{2}(Q)}^{2}
=\lambda_{\tilde{W}}(\vec t)+\sigma |A|^2\sum _{q\in Z^2} \big|\left(E_{\tilde W}\right)_{qj}\big|^2=\lambda_{\tilde{W}}(\vec t)+\sigma |A|^2\left(E_{\tilde W}\right)_{jj}\end{equation}

%
%
Moreover, by the definition of $u( \vec{x})$, we have
\begin{align}
&u( \vec{x}):=Ae^{i\langle{ \vec{p}_{j}(\vec t), \vec{x} }\rangle}\left(1+\sum\limits_{q\in \Z^n}\sum\limits_{r=1}^{\infty}G_{\tilde{W},r}(k,t)_{q+j,j}e^{i\langle{p_q(0), \vec{x} }\rangle}\right).
\end{align} 
Using Theorem \ref{linear pert 2l>n} and Corollary \ref{est of uv 2l>n}, we obtain  (\ref{solution construction 2l>n}), \eqref{June7}.
\end{proof}
\section{Isoenergetic Surface}
\begin{lemma}\label{derivative-m} 
The following inequalities holds for any  $m=0,1,2,...$:
\begin{align}
\|\nabla _{\vec t}\left(W_{m}-W_{m-1}\right)\|_{*}\leq 2k^{n-1+\delta}(8|\sigma ||A|^2k^{-\gamma_{0}})^{m}, \label{mmd-m}
\end{align}
\begin{equation}
\|\nabla _{\vec t}\left(E_{{m}}-E_{m-1}\right)\|_{1}<2k^{n-1+\delta -\gamma _0}(8|\sigma ||A|^2k^{-\gamma_{0}})^{m}, \label{mm++d-m}
\end{equation}
\begin{equation}
\big|\nabla _{\vec t}\left(\lambda _{{m}}-\lambda _{m-1}\right)\big|_{}< k^{2l-1-\gamma _0}\left(8|\sigma| |A|^2k^{ -\gamma _0}\right)^{m}, \label{lambda-dm}
\end{equation}
where $W_{-1}:=0$, $E_{-1}$,  $\lambda _{-1}$ correspond to the zero potential,  $\nabla \lambda _{-1}=p_j^{2l-2}(t)\vec p_j(t)$; $\gamma_{0}=2l-n-2\delta>0$, $\delta >0$ and $|\sigma ||A|^2<k^{\gamma _1}$, $\gamma _1<\gamma _0$,  $k$ being sufficiently large $k>k_1(\|V\|_*,\delta )$.
\end{lemma}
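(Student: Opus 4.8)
The approach will be to differentiate, with respect to $\vec t$, the whole proof of Lemma \ref{main lemma  1 2l>n} (and, for \eqref{lambda-dm}, of Lemma \ref{lemma lambda 2l>n}); the only new ingredient is that one $\vec t$-differentiation of a resolvent $(H_0(\vec t)-z)^{-1}$ on $C_0$ costs a factor of order $k^{n-1+\delta}$. Indeed, for $z\in C_0$ and $\vec t$ in the $(k^{-n+1-2\delta})$-neighborhood of $\chi_0(k,\delta)$, the diagonal operator $M_l(z):=(H_0(\vec t)-z)^{-1}\,\partial_{t_l}H_0(\vec t)$ has entries $\partial_{t_l}p_m^{2l}(\vec t)\,(p_m^{2l}(\vec t)-z)^{-1}$; since $|\partial_{t_l}p_m^{2l}(\vec t)|\le 2l|\vec p_m(\vec t)|^{2l-1}$ and \eqref{in} (with $|p_j^{2l}(\vec t)-z|\ge ck^{2l-n-\delta}$ in this neighborhood) bounds the denominators from below, one gets $\|M_l(z)\|_1\le ck^{n-1+\delta}$. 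As $\partial_{t_l}(H_0(\vec t)-z)^{-1}=-M_l(z)(H_0(\vec t)-z)^{-1}$, differentiating a resolvent merely inserts a bounded factor of norm $\le ck^{n-1+\delta}$ without increasing the number of resolvents in a product; this is the mechanism behind Theorem \ref{2.3}, and its $|m|=1$ cases \eqref{2.2.35}, \eqref{2.2.34b} will be used directly at the base step. Since the index $j$ is locally constant on the non-resonant neighborhood, all the $\vec t$-derivatives below make sense and the series of Theorem \ref{linear pert 2l>n} may be differentiated term by term. I will prove \eqref{mmd-m}--\eqref{lambda-dm} together by induction on $m$: for $m=0$ one has $W_{-1}=0$, $\tilde W_0=V$, and $E_{-1},\lambda_{-1}$ are $\vec t$-independent, so $\nabla_{\vec t}(W_0-W_{-1})=0$, while $\nabla_{\vec t}(E_0-E_{-1})$ and $\nabla_{\vec t}(\lambda_0-\lambda_{-1})$ are bounded by \eqref{2.2.35}, \eqref{2.2.34b} (with potential $V$), which for $k>k_1$ gives \eqref{mmd-m}--\eqref{lambda-dm} at $m=0$.

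For the inductive step, begin with $\nabla_{\vec t}(E_m-E_{m-1})$. Differentiate the integrand of $G_{m,r}-G_{m-1,r}$ (see \eqref{proj op c 2l>n}, \eqref{M17b}) through its $\le r+1$ resolvents by Leibniz, after the telescoping $B_m^r-B_{m-1}^r=\sum_{i}B_m^i(H_0(\vec t)-z)^{-1}(\tilde W_m-\tilde W_{m-1})B_{m-1}^{r-1-i}$; every differentiated term gains one factor $\le ck^{n-1+\delta}$, while by \eqref{M17c} and $\|B_m(z)-B_{m-1}(z)\|_1\le k^{-2l+n+\delta}\|\tilde W_m-\tilde W_{m-1}\|_*$ the rest is controlled exactly as in \eqref{estimate of difference of G}. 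Summing the geometric series over $r$ (convergent for $k>k_1$) and invoking Lemma \ref{main lemma  1 2l>n} to bound $\|\tilde W_m-\tilde W_{m-1}\|_*\le\|W_m-W_{m-1}\|_*\le(|\sigma||A|^2k^{-\gamma_0})^m$, one gets $\|\nabla_{\vec t}(E_m-E_{m-1})\|_1\le ck^{n-1+\delta}k^{-(2l-n-\delta)}(|\sigma||A|^2k^{-\gamma_0})^m=ck^{n-1-\gamma_0}(|\sigma||A|^2k^{-\gamma_0})^m$, which for $k>k_1$ lies below $2k^{n-1+\delta-\gamma_0}(8|\sigma||A|^2k^{-\gamma_0})^m$, i.e. \eqref{mm++d-m}. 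Telescoped from $\nabla_{\vec t}E_{-1}=0$, this also yields the a priori estimate $\|\nabla_{\vec t}E_s\|_1\le 4k^{n-1+\delta-\gamma_0}$ for all $s\le m-1$.

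Next, for \eqref{mmd-m}, write (with $\psi_{-1}\equiv A$ if $m=1$, as in the $m=1$ part of Lemma \ref{main lemma  1 2l>n}) $W_m-W_{m-1}=\sigma(|\psi_{m-1}|^2-|\psi_{m-2}|^2)=\sigma\,\Re\big[(\psi_{m-1}-\psi_{m-2})(\bar\psi_{m-1}+\bar\psi_{m-2})\big]$, so that by Leibniz and \eqref{properties 1 2l>n}
\begin{equation*}
\|\nabla_{\vec t}(W_m-W_{m-1})\|_*\le|\sigma|\Big[\|\nabla_{\vec t}(\psi_{m-1}-\psi_{m-2})\|_*\big(\|\psi_{m-1}\|_*+\|\psi_{m-2}\|_*\big)+\|\psi_{m-1}-\psi_{m-2}\|_*\big(\|\nabla_{\vec t}\psi_{m-1}\|_*+\|\nabla_{\vec t}\psi_{m-2}\|_*\big)\Big].
\end{equation*}
Since $\psi_s(\vec x)=A\sum_q E_s(\vec t)_{j+q,j}e^{i\langle\vec p_q(0),\vec x\rangle}$ (cf. \eqref{change cordinator 2l>n}) depends on $\vec t$ only through $E_s(\vec t)$, one has $\|\psi_s\|_*\le|A|\|E_s\|_1$ and $\|\nabla_{\vec t}\psi_s\|_*\le|A|\|\nabla_{\vec t}E_s\|_1$; inserting $\|E_s\|_1\le2$ (from \eqref{estimate of E}), $\|\nabla_{\vec t}E_s\|_1\le4k^{n-1+\delta-\gamma_0}$ (the a priori bound), $\|E_{m-1}-E_{m-2}\|_1\le2k^{-(2l-n-\delta)}(|\sigma||A|^2k^{-\gamma_0})^{m-1}$ (from \eqref{estimate of difference of E} and Lemma \ref{main lemma  1 2l>n}) and $\|\nabla_{\vec t}(E_{m-1}-E_{m-2})\|_1\le2k^{n-1+\delta-\gamma_0}(8|\sigma||A|^2k^{-\gamma_0})^{m-1}$ (inductive hypothesis at $m-1$), the first product dominates and contributes at most one half of the claimed bound, while the second product carries an extra $k^{-(2l-n-\delta)}$ and is of lower order; hence for $k>k_1$ the right-hand side is $\le2k^{n-1+\delta}(8|\sigma||A|^2k^{-\gamma_0})^m$.

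Finally, \eqref{lambda-dm} follows by repeating the proof of Lemma \ref{lemma lambda 2l>n} with one $\vec t$-derivative inside $g_{m,r}-g_{m-1,r}=\frac{(-1)^r}{2\pi ir}\Tr\oint_{C_0}(B_m^r-B_{m-1}^r)\,dz$: pair each differentiated resolvent with the adjacent $\partial_{t_l}H_0(\vec t)$ into a bounded operator of norm $\le ck^{n-1+\delta}$, and estimate the remaining resolvents in the Hilbert--Schmidt norm through $\|(H_0(\vec t)-z)^{-1}\|_{\mathbf{S}_2}\le ck^{-(2l-n-\delta)}$, the sharp substitute for \eqref{linfty norm est} needed here. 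The $r=2$ term dominates, and with Lemma \ref{main lemma  1 2l>n} it yields $|\nabla_{\vec t}(\lambda_m-\lambda_{m-1})|\le ck^{2n-2l-1+2\delta}(|\sigma||A|^2k^{-\gamma_0})^m$; as $\gamma_0=2l-n-2\delta$ the target exponent is $2l-1-\gamma_0=n-1+2\delta$, and $k^{2n-2l-1+2\delta}=k^{n-1+2\delta}k^{n-2l}$ is smaller by $k^{n-2l}<1$ (this is where $2l>n$ enters), so for $k>k_1$ the claim follows. The main obstacle is exactly this last point: one cannot merely multiply the conclusions of Lemmas \ref{main lemma  1 2l>n} and \ref{lemma lambda 2l>n} by $k^{n-1+\delta}$ — that would prove \eqref{mmd-m}--\eqref{mm++d-m} but would fail \eqref{lambda-dm} when $n<2l\le2n$ — so the differentiation has to be carried out inside the resolvent/trace expansions while simultaneously retaining the $\|\tilde W_m-\tilde W_{m-1}\|_*$ dependence and, for the eigenvalue, a Schatten estimate strictly sharper than \eqref{linfty norm est}; calibrating the constants so that the factor $8^m$ absorbs all the accumulated losses is then routine.
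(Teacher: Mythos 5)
Your treatment of \eqref{mmd-m} and \eqref{mm++d-m} follows the paper's route (differentiate the resolvent expansions, pay a factor $k^{n-1+\delta}$ per differentiated resolvent, induct), but the write-up has two defects. First, the order inside the inductive step is circular: you "begin with $\nabla_{\vec t}(E_m-E_{m-1})$", yet a full Leibniz expansion of $\nabla_{\vec t}(G_{m,r}-G_{m-1,r})$ necessarily produces terms containing $\nabla_{\vec t}\tilde W_m$ and $\nabla_{\vec t}(\tilde W_m-\tilde W_{m-1})$, i.e.\ exactly \eqref{mmd-m} at level $m$, which you only prove afterwards. Second, you differentiate only "through the resolvents" and silently drop those potential-derivative terms; the one carrying $\nabla_{\vec t}(\tilde W_m-\tilde W_{m-1})$ is of the \emph{same} order as your claimed bound and cannot be discarded. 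Both defects are repairable: prove \eqref{mmd-m} at level $m$ first (it needs only level-$(m-1)$ projection data), then \eqref{mm++d-m}; this is precisely the paper's recurrence \eqref{recurrent} together with \eqref{nablaG-1}, which retains the $\|\nabla_{\vec t}(W_{m-1}-W_{m-2})\|$ and $\|\nabla_{\vec t}W_{s}\|$ terms explicitly.

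The genuine gap is in \eqref{lambda-dm}. You correctly diagnose that multiplying the eigenvalue estimates by $k^{n-1+\delta}$ fails when $n<2l\le 2n$, but your cure rests on the unproved claim $\|(H_0(\vec t)-z)^{-1}\|_{\mathbf S_2}\le ck^{-(2l-n-\delta)}$. What \eqref{in} and \eqref{M20b*} actually give by interpolation with \eqref{linfty norm est} is only $\|(H_0-z)^{-1}\|_{\mathbf S_2}\le \|(H_0-z)^{-1}\|^{1/2}\,\|(H_0-z)^{-1}\|_{\mathbf S_1}^{1/2}\lesssim k^{(n-1)/2}\,k^{-(2l-n-\delta)}$; to remove the factor $k^{(n-1)/2}$ you would need that only $O(1)$ points $\vec p_q(\vec t)$ satisfy $|p_q^{2l}(\vec t)-k^{2l}|\lesssim k^{2l-n}$, and the non-resonant set only controls the much smaller $2k^{2l-n-\delta}$-neighborhood, so up to $O(k^{n-1})$ lattice points may sit just outside it. With the worst-case $\mathbf S_2$ bound your $r=2$ term becomes $\sim k^{3n-2l-2+2\delta}(\cdot)^m$, which exceeds the target $k^{2l-1-\gamma_0}=k^{n-1+2\delta}$ exactly when $l<n$ (e.g.\ $n=3$, $l=2$) --- the regime you set out to rescue. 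The paper sidesteps all of this: it never differentiates the series for $\lambda_m$, but uses $\nabla_{\vec t}\lambda_m=\Tr\bigl(E_m(\vec t)\nabla_{\vec t}(H_0(\vec t)+W_m)\bigr)$, so that $\nabla_{\vec t}(\lambda_m-\lambda_{m-1})$ is controlled by $\|E_m-E_{m-1}\|_{\mathbf S_1}\lesssim\|E_m-E_{m-1}\|_1$ (already bounded) times $k^{2l-1}$, plus $\|\nabla_{\vec t}(W_m-W_{m-1})\|_*$ from \eqref{mmd-m}. You should either adopt that trace formula or supply a genuine proof of the sharp $\mathbf S_2$ estimate; as written, \eqref{lambda-dm} is not established.
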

\begin{corollary}\label{derivative} 
\begin{align}
\|\nabla _{\vec t}\left(W-W_{m}\right)\|_{*}\leq 4k^{n-1+\delta}(8|\sigma ||A|^2k^{-\gamma_{0}})^{m+1}, \label{mm+d}
\end{align}
\begin{equation}
\|\nabla _{\vec t}\left(E_{{\tilde W}}-E_{m}\right)\|_{1}<4k^{n-1+\delta-\gamma _0}(8|\sigma ||A|^2k^{-\gamma_{0}})^{m+1}, \label{mm++d}
\end{equation}
\begin{equation}
\big|\nabla _{\vec t}\left(\lambda _{{\tilde W}}-\lambda _{m}\right)\big|_{}< 2k^{2l-1+\delta -\gamma _0}\left(8|\sigma| |A|^2k^{ -\gamma _0}\right)^{m+1} .\label{lambda-d}
\end{equation}
\end{corollary}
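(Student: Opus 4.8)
The plan is to obtain all three tail bounds from the consecutive--difference estimates \eqref{mmd-m}, \eqref{mm++d-m}, \eqref{lambda-dm} of Lemma~\ref{derivative-m} by telescoping and summing a geometric series, in exact parallel with the way Corollary~\ref{cauchy sequence 2l>n} follows from Lemma~\ref{main lemma  1 2l>n}. The only point that needs a word of justification is that the gradient $\nabla_{\vec t}$ may be interchanged with the limit $m\to\infty$.

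Put $q:=8|\sigma||A|^2k^{-\gamma_0}$. Since $|\sigma||A|^2<k^{\gamma_1}$ with $\gamma_1<\gamma_0$, we have $q<8k^{\gamma_1-\gamma_0}<\tfrac12$ once $k$ is large enough (part of the meaning of ``$k>k_1$'' here), so $\sum_{s\ge m+1}q^{s}=q^{m+1}/(1-q)<2q^{m+1}$. By \eqref{mmd-m} the series $\sum_{s\ge 1}\nabla_{\vec t}(W_s-W_{s-1})$ converges in $\|\cdot\|_*$ geometrically, uniformly in $\vec t$ on the $(k^{-n+1-2\delta})$--neighborhood of $\chi_0(k,\delta)$; since the Bloch data, and hence each $\vec x$--Fourier coefficient of $W_s$, depends smoothly on $\vec t$ there (Theorem~\ref{2.3}), while $W_s\to W$ in $\|\cdot\|_*$ by Corollary~\ref{cauchy sequence 2l>n} (see \eqref{mm+}), the classical theorem on term-by-term differentiation of a uniformly convergent series of derivatives applies and gives
\begin{equation*}
\nabla_{\vec t}(W-W_m)=\sum_{s=m+1}^{\infty}\nabla_{\vec t}(W_s-W_{s-1}).
\end{equation*}
Taking $\|\cdot\|_*$, using the triangle inequality, \eqref{mmd-m} and $\sum_{s\ge m+1}q^{s}<2q^{m+1}$ yields
\begin{equation*}
\|\nabla_{\vec t}(W-W_m)\|_*\le 2k^{n-1+\delta}\sum_{s=m+1}^{\infty}q^{s}<4k^{n-1+\delta}q^{m+1},
\end{equation*}
which is \eqref{mm+d}.

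The estimates \eqref{mm++d} and \eqref{lambda-d} follow by the identical telescoping. First, $E_m(\vec t)\to E_{\tilde W}(\vec t)$ in $\|\cdot\|_1$ by \eqref{M20a} together with $\|\tilde W_m-\tilde W\|_*\le\|W_m-W\|_*\to0$, and $\lambda_m(\vec t)\to\lambda_{\tilde W}(\vec t)$ by the proof of Lemma~\ref{lemma lambda 2l>n}; combined with the geometric summability of $\nabla_{\vec t}(E_s-E_{s-1})$ and $\nabla_{\vec t}(\lambda_s-\lambda_{s-1})$ from \eqref{mm++d-m}, \eqref{lambda-dm} and the smoothness in $\vec t$ of Theorem~\ref{2.3}, the same term-by-term argument gives
\begin{equation*}
\nabla_{\vec t}(E_{\tilde W}-E_m)=\sum_{s=m+1}^{\infty}\nabla_{\vec t}(E_s-E_{s-1}),\qquad
\nabla_{\vec t}(\lambda_{\tilde W}-\lambda_m)=\sum_{s=m+1}^{\infty}\nabla_{\vec t}(\lambda_s-\lambda_{s-1}).
\end{equation*}
The triangle inequality with \eqref{mm++d-m} and $\sum_{s\ge m+1}q^{s}<2q^{m+1}$ yields $\|\nabla_{\vec t}(E_{\tilde W}-E_m)\|_1<4k^{n-1+\delta-\gamma_0}q^{m+1}$, i.e.\ \eqref{mm++d}; and with \eqref{lambda-dm} it yields $|\nabla_{\vec t}(\lambda_{\tilde W}-\lambda_m)|<2k^{2l-1-\gamma_0}q^{m+1}\le 2k^{2l-1+\delta-\gamma_0}q^{m+1}$, i.e.\ \eqref{lambda-d}, the extra factor $k^{\delta}$ being harmless slack.

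I do not see a genuine obstacle here: every step except the interchange of $\nabla_{\vec t}$ with $m\to\infty$ is a one-line geometric estimate, and that interchange is precisely why Lemma~\ref{derivative-m} was stated with geometric decay in $s$ rather than mere convergence --- the decay makes the relevant series of $\vec t$--derivatives converge uniformly on the (complex) $(k^{-n+1-2\delta})$--neighborhood of $\chi_0(k,\delta)$, so the ``uniform limit of derivatives'' theorem applies componentwise in $\vec t$ and coefficientwise in $\vec x$.
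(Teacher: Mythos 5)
Your proposal is correct and is exactly the argument the paper intends (it leaves Corollary~\ref{derivative} unproved as the obvious analogue of how Corollary~\ref{cauchy sequence 2l>n} follows from Lemma~\ref{main lemma  1 2l>n}): telescoping the consecutive-difference bounds \eqref{mmd-m}--\eqref{lambda-dm} and summing the geometric series with ratio $8|\sigma||A|^2k^{-\gamma_0}<\tfrac12$, the constants $4$, $4$, $2$ and the harmless $k^{\delta}$ slack matching as you computed. Your extra remark justifying the interchange of $\nabla_{\vec t}$ with the limit via uniform convergence of the differentiated series is a sound (and slightly more careful) version of what the paper takes for granted.
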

\begin{corollary}\label{chh}
\begin{align}
\|\nabla _{\vec t}W\|_{*}\leq 32|\sigma ||A|^2k^{n-1+\delta -\gamma_{0}}, \label{mm+d*}
\end{align}
\begin{equation}
\|\nabla _{\vec t}E_{{\tilde W}}\|_{1}<4k^{n-1+\delta-\gamma _0}. \label{mm++d*}
\end{equation}
\begin{equation}
\big|\nabla _{\vec t}\lambda _{{\tilde W}}-p_j^{2l-2}(t)\vec p_j(t) \big|_{}< 2k^{2l-1+\delta -\gamma _0}.\label{lambda-d*}
\end{equation}
\end{corollary}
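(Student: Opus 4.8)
The plan is to read off Corollary~\ref{chh} from Corollary~\ref{derivative} with essentially no new work: in each of \eqref{mm+d}, \eqref{mm++d}, \eqref{lambda-d} I would choose the index $m$ for which the telescoped tail of Lemma~\ref{derivative-m} already begins at the limiting object, modulo a single reference term ($W_0$, $E_{-1}$, or $\lambda_{-1}$) whose $\vec t$-derivative is explicitly known, and then conclude by the triangle inequality.

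For \eqref{mm+d*}: since $W_0=V+\sigma|A|^2$ is a fixed periodic potential not depending on the quasimomentum, $\nabla_{\vec t}W_0=0$, so $\nabla_{\vec t}W=\nabla_{\vec t}(W-W_0)$ and \eqref{mm+d} with $m=0$ gives $\|\nabla_{\vec t}W\|_*\le 4k^{n-1+\delta}\bigl(8|\sigma||A|^2k^{-\gamma_0}\bigr)=32|\sigma||A|^2k^{n-1+\delta-\gamma_0}$. For \eqref{mm++d*}: the projection $E_{-1}$ corresponds to the zero potential, hence equals the free projection with $(E_0)_{sq}=\delta_{sj}\delta_{qj}$ from Theorem~\ref{linear pert 2l>n}, which is a constant matrix; therefore $\nabla_{\vec t}E_{\tilde W}=\nabla_{\vec t}(E_{\tilde W}-E_{-1})$ and \eqref{mm++d} with $m=-1$ yields $\|\nabla_{\vec t}E_{\tilde W}\|_1<4k^{n-1+\delta-\gamma_0}$. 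For \eqref{lambda-d*}: by the convention of Lemma~\ref{derivative-m} the reference eigenvalue is $\lambda_{-1}=p_j^{2l}(\vec t)$ with $\nabla_{\vec t}\lambda_{-1}=p_j^{2l-2}(t)\vec p_j(t)$, so $\nabla_{\vec t}\lambda_{\tilde W}-p_j^{2l-2}(t)\vec p_j(t)=\nabla_{\vec t}(\lambda_{\tilde W}-\lambda_{-1})$ and \eqref{lambda-d} with $m=-1$ gives $\bigl|\nabla_{\vec t}\lambda_{\tilde W}-p_j^{2l-2}(t)\vec p_j(t)\bigr|<2k^{2l-1+\delta-\gamma_0}$.

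The main thing to be careful about — and where I would actually write out an argument rather than merely cite — is that Corollary~\ref{derivative} is being invoked at the endpoint indices $m=0$ and $m=-1$; this is legitimate because \eqref{mm+d}--\eqref{lambda-d} follow by summing the one-step bounds \eqref{mmd-m}--\eqref{lambda-dm} of Lemma~\ref{derivative-m} over $s\ge m+1$, and that geometric series converges with ratio $8|\sigma||A|^2k^{-\gamma_0}\le\tfrac{1}{2}$ by the standing hypotheses $|\sigma||A|^2<k^{\gamma_1}$, $\gamma_1<\gamma_0$, and $k$ large ($k>k_1(\|V\|_*,\delta)$). A secondary, purely notational caveat: Theorem~\ref{linear pert 2l>n} calls the free projection $E_0$, while the $m=0$ term of the sequence in Lemma~\ref{derivative-m} is the projection of $\tilde W_0=V$, which \emph{does} depend on $\vec t$; in the projection estimate the correct reference to subtract is $E_{-1}$, not that $E_0$. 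I do not foresee any genuine analytic obstacle, since all the estimation has already been done in Lemma~\ref{derivative-m} and Corollary~\ref{derivative}.
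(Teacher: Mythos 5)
Your proposal is correct and follows essentially the same route as the paper: the paper's own proof also sets $m=0$ in \eqref{mm+d} and $m=-1$ in \eqref{mm++d}, \eqref{lambda-d}, using $\nabla_{\vec t}W_0=0$, $\nabla_{\vec t}E_{-1}=0$ and $\nabla_{\vec t}\lambda_{-1}=p_j^{2l-2}(t)\vec p_j(t)$. Your extra remarks (justifying the endpoint indices via the telescoped bounds of Lemma \ref{derivative-m} and distinguishing $E_{-1}$ from the sequence term $E_0$) are sound but not a different argument.
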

\begin{proof}[Proof of Corollary \ref{chh}] Setting $m=0$ in \eqref{mm+d} and $m=-1$ in \eqref{mm++d} and \eqref{lambda-d} and taking into account that $\nabla _{\vec t}W_0=0$,
$\nabla_{\vec t} E_{-1}=0$ and $\nabla_{\vec t} \lambda _{-1}=p_j^{2l-2}(t)\vec p_j(t)$, we obtain \eqref{mm+d}--\eqref{lambda-d}. \end{proof}

\begin{proof}[Proof of Lemma \ref{derivative-m}]  First, we establish a recurrent relation \eqref{recurrent} for the left-hand part of \eqref{mmd-m}. 
Indeed considering as in the proof of \eqref{W-m}, we obtain:
 \begin{align} \label{nablaW-m}
\nonumber  \|\nabla _{\vec t}\left(W_{m}-W_{m-1}\right)\|_{*}\leq& |\sigma ||A|^2\|\nabla _{\vec t}\left(E_{{m-1}}-E_{{m-2}}\right)\|_1
\left(\|E_{{m-1}}\|_1+\|E_{{m-2}}\|_1\right)\\
&+ \sigma ||A|^2\|E_{{m-1}}-E_{{m-2}}\|_1\left(\|\nabla _{\vec t}E_{{m-1}}\|_1+
\|\nabla _{\vec t}E_{{m-2}}\|_1\right),
\end{align}
where $m\geq 1$. Considering \eqref{ii} and \eqref{2.2.35}, we obtain:
\begin{align} 
\label{nablaW-1}
  \|\nabla _{\vec t}\left(W_{1}-W_{0}\right)\|_{*}\leq |\sigma ||A|^2\|\nabla _{\vec t}E_{{0}}\|_1
\left(\|E_{{0}}\|_1+1+\|E_{{0}}-E_{{-1}}\|_1\right)\leq 
2 |\sigma ||A|^2k^{n-1+\delta -\gamma _0}.
\end{align}
Obvioualy $\nabla_{\vec t} W_0=0$. Let $m\geq 2$.
Using the estimates \eqref{estimate of difference of E} and \eqref{mm}, we get
\begin{align} \nonumber \|\nabla _{\vec t}\left(W_{m}-W_{m-1}\right)\|_{*}\leq& 2|\sigma ||A|^2\|\nabla _{\vec t}\left(E_{{m-1}}-E_{{m-2}}\right)\|_1
\\
&+ \left(|\sigma ||A|^2k^{-\gamma _0}\right)^m \|\left(\|\nabla _{\vec t}E_{{m-1}}\|_1+
\|\nabla _{\vec t}E_{{m-2}}\|_1\right). \label{nablaW-mm}
\end{align}
Now, we estmate $\|\nabla _{\vec t}\left(E_{{m-1}}-E_{{m-2}}\right)\|_1$. Obviously,
$$\|\nabla _{\vec t}\left(E_{{m-1}}-E_{{m-2}}\right)\|_1\leq \sum _{r\geq 1}\|\nabla _{\vec t}\left(G_{{m-1,r}}-G_{{m-2},r}\right)\|_1.$$
Next, 
\begin{align} 
\nonumber&\|\nabla _{\vec t}\left(G_{{m-1,r}}-G_{{m-2},r}\right)\|_1\\
\nonumber\leq& \max _{z\in C_0}\|\nabla _{\vec t}\left(B_{{m-1}}^r-B_{{m-2}}^r\right)\|_1+ 
\max _{z\in C_0}\|\left(B_{{m-1}}^r-B_{{m-2}}^r\right)\|_1\|\nabla _{\vec t}(H_0(t)-z)^{-1}\| \\
\nonumber\leq&
\max _{z\in C_0}\|\nabla _{\vec t}\left(B_{{m-1}}-B_{{m-2}}\right)\|_1\left(\|B_{{m-1}}\|_1+\|B_{{m-2}}\|_1\right)^{r-1} \\
\nonumber&+
\max _{z\in C_0}\|\left(B_{{m-1}}-B_{{m-2}}\right)\|_1\left(\|\nabla _{\vec t}B_{{m-1}}\|_1+\|\nabla _{\vec t}B_{{m-2}}\|_1\right)\left(\|B_{{m-1}}\|_1
+\|B_{{m-2}}\|_1\right)^{r-2}\\
&+
\max _{z\in C_0}\|\left(B_{{m-1}}-B_{{m-2}}\right)\|_1\left(\|B_{{m-1}}\|_1
+\|B_{{m-2}}\|_1\right)^{r-1}\|\nabla _{\vec t}(H_0(t)-z)^{-1}\|.\label{nablaG-m} 
\end{align}
Obviously,
\begin{equation*}
\nabla _{\vec t}B_{m}=(H_0(t)-z)^{-1}\nabla _{\vec t}\tilde W_{m}+\big(\nabla _{\vec t}(H_0(t)-z)^{-1}\big)\tilde W_m,
\end{equation*}
\begin{equation*}
\big\|\nabla _{\vec t}(H_0(t)-z)^{-1}\big\|<2lk^{-2l+2n-1+2\delta}.
\end{equation*}
It easily follows:
\begin{equation}
\left\|\nabla _{\vec t}B_{m}\right\|_1=\left\|\nabla _{\vec t}\tilde W_{m}\right\|_1k^{-2l+n+\delta}+2l(1+\|V\|_*)k^{-2l+2n-1+2\delta}, \label{nablaB-m}
\end{equation}
$$\left\|\nabla _{\vec t}\left(B_{m-1}-B_{m-2}\right)\right\|_1=$$
\begin{equation*}
\left\|\nabla _{\vec t}\left(\tilde W_{m-1}-\tilde W_{m-2}\right)\right\|_1k^{-2l+n+\delta}+2l\left\|\tilde W_{m-1}-\tilde W_{m-2}\right\|k^{-2l+2n-1+2\delta}.
\end{equation*}
Using \eqref{mm}, we arrive at
$$\left\|\nabla _{\vec t}\left(B_{m-1}-B_{m-2}\right)\right\|_1=$$
\begin{equation} \label{nablaB-m*}
\left\|\nabla _{\vec t}\left(\tilde W_{m-1}-\tilde W_{m-2}\right)\right\|_1k^{-2l+n+\delta}+2lk^{-2l+2n-1+\delta}\left(|\sigma ||A|^2
k^{-\gamma _0}\right)^{m-1}.
\end{equation}
Substituting the estimates \eqref{nablaB-m} and \eqref{nablaB-m*} into \eqref{nablaG-m}, and using \eqref{M17c}, we obtain:
\begin{equation}\label{nablaG-1}
\|\nabla _{\vec t}\left(G_{{m-1,r}}-G_{{m-2},r}\right)\|_1\leq 
\|\nabla _{\vec t}\left(W_{{m-1}}-W_{{m-2}}\right)\|_1k^{-\gamma _0r}+\end{equation}
$$\left(\|\nabla _{\vec t}W_{{m-1}}\|_1+\|\nabla _{\vec t}W_{{m-2}}\|_1\right)\left(|\sigma ||A|^2k^{-\gamma _0}\right)^{m-1}k^{-\gamma _0r}+
3k^{n-1+\delta}\left(|\sigma ||A|^2
k^{-\gamma _0}\right)^{m-1}k^{-\gamma _0r}.
$$
Summarizing the last estimate over $r\geq 1$, we obtain:
\begin{align} \label{nablaE-m}
&|\nabla _{\vec t}\left(E_{{m-1}}-E_{{m-2}}\right)\|_1\\
<&
  2\|\nabla _{\vec t}\left(W_{{m-1}}-W_{{m-2}}\right)\|_1k^{-\gamma _0}
\nonumber +2\left(\|\nabla _{\vec t}W_{{m-1}}\|_1+\|\nabla _{\vec t}W_{{m-2}}\|_1\right)\left(|\sigma ||A|^2k^{-\gamma _0}\right)^{m-1}k^{-\gamma _0}\\
&+
\nonumber 6k^{n-1+\delta}\left(|\sigma ||A|^2
k^{-\gamma _0}\right)^{m-1}k^{-\gamma _0}.
 \end{align}
Similarly,
\begin{align}\|\nabla _{\vec t}E_{{m-1}}\|_1<
 2\|\nabla _{\vec t}W_{{m-1}}\|_1k^{-\gamma _0}+2k^{n-1+\delta}k^{-\gamma _0}.
\end{align}
Substituting the last two estimates into \eqref{nablaW-mm}, we get:
\begin{align} \nonumber \|\nabla _{\vec t}\left(W_{m}-W_{m-1}\right)\|_{*}\leq 4\|\nabla _{\vec t}\left(W_{{m-1}}-W_{{m-2}}\right)\|_1|\sigma ||A|^2k^{-\gamma _0}+\\
+\left(\|\nabla _{\vec t}W_{{m-1}}\|_1+\|\nabla _{\vec t}W_{{m-2}}\|_1+2k^{n-1+\delta}\right)\left(|\sigma ||A|^2k^{-\gamma _0}\right)^{m}. \label{recurrent}
\end{align}
Using \eqref{nablaW-1} and mathematical induction, we easily get \eqref{mmd-m}.
Now, estimate \eqref{mm++d-m} follows from \eqref{nablaE-m}.  Next we apply a well known formula to for the first derivative of $\lambda _{m}$: $\nabla _{\vec t}\lambda _{m}=\Tr \left(E_m(t)\nabla _{\vec t}(H_0(\vec t)+W_m)\right)$. Next, we use  \eqref{mmd-m}. Taking into account that $\|E _m\|_{S_1}=1$,  $\|E _m-E_{m-1}\|_{S_1}\leq 2\|E _m-E_{m-1}\|
\leq 2\|E _m-E_{m-1}\|_1$, since $E_m$ is a one dimensional projector, we arrive at \eqref{lambda-dm}.
\end{proof}

\begin{lemma}\label{L:2.12} For any sufficiently large $\lambda $, every $A\in C: |A|^2\sigma <k^{\gamma _0-\delta }$, $k^{2l}=\lambda $, and for every
$\vec{\nu}\in\mathcal{B} (\lambda)$, there is a
unique $\varkappa  =\varkappa (\lambda , A, \vec{\nu})$ in the
interval $$
I:=[k-k^{-2l+1+\gamma _0},k+k^{-2l+1+\gamma _0}], $$ such that \begin{equation}\label{2.70}
\lambda (\varkappa \vec{\nu},A)=\lambda . \end{equation}
Furthermore, 
\begin{equation} \label{varkappa}
|\varkappa (\lambda , A, \vec{\nu}) - \tilde k| \leq c\left(1+|\sigma ||A|^2\right)k^{-2l+1-\gamma _0+\delta }, \ \ \tilde k=(\lambda -\sigma |A|^2)^{1/2l}.\end{equation}
\end{lemma}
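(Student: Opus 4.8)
The plan is to reduce \eqref{2.70} to a monotonicity question in one real variable, exactly as in the linear Lemma~\ref{L:2.12a}. Fix $\vec\nu\in\mathcal{B}(\lambda)$ and consider $f(\varkappa):=\lambda(\varkappa\vec\nu,A)-\lambda$ for $\varkappa$ ranging over $I$. I will show that $f$ is $C^1$ and strictly increasing on $I$ with $f'(\varkappa)$ of order $k^{2l-1}$, and that $|f(\tilde k)|$ is small; the intermediate value theorem then produces a root of \eqref{2.70}, strict monotonicity makes it unique in $I$, and comparing $|f(\tilde k)|$ with the lower bound on $f'$ forces the root into the interval asserted in \eqref{varkappa}. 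First I must check that the construction of Theorem~\ref{main theorem 2l>n} is available at every point of the segment $\{\varkappa\vec\nu:\varkappa\in I\}$: with $j$ the index attached to $\vec\nu$ and $\vec t(\varkappa):=\varkappa\vec\nu-2\pi j$ one has $|\vec t(\varkappa)-\vec t_0|=|\varkappa-k|\le k^{-2l+1+\gamma_0}=k^{-n+1-2\delta}$ (since $\gamma_0=2l-n-2\delta$), so $\vec t(\varkappa)$ lies in the $(k^{-n+1-2\delta})$-neighborhood of $\chi_0(k,\delta)$ and $p_j^{2l}(\vec t(\varkappa))=\varkappa^{2l}\in\varepsilon(k,\delta)$. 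As $\varkappa\asymp k$, all the linear results of Section~2, their $\vec t$-derivative versions, and Theorem~\ref{main theorem 2l>n} apply at $\varkappa\vec\nu$ with $k$ replaced by $\varkappa$, and by \eqref{lambda06}
\[
\lambda(\varkappa\vec\nu,A)=\lambda_{\tilde W}\big(\vec t(\varkappa)\big)+\sigma|A|^2\big(E_{\tilde W}\big)_{jj},
\]
where $\tilde W,E_{\tilde W},\lambda_{\tilde W}$ are the self-consistent objects of the fixed-point procedure run at $\varkappa\vec\nu$. Since the iterates $W_m$ converge in $\|\cdot\|_*$ together with their $\vec t$-derivatives (Corollaries~\ref{cauchy sequence 2l>n} and \ref{derivative}), $\lambda_{\tilde W}(\vec t)$ and $(E_{\tilde W})_{jj}$ are $C^1$ in $\vec t$, so $f\in C^1(I)$.

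The two quantitative inputs are as follows. Using $p_j^{2l}(\vec t(\varkappa))=\varkappa^{2l}$, $\|\tilde W\|_*\le 1+\|V\|_*$ uniformly in $\varkappa$ and $A$ by \eqref{est w 2l>n 2}, and estimates \eqref{i}, \eqref{ii} of Corollary~\ref{est of uv 2l>n},
\[
f(\varkappa)=\varkappa^{2l}-\lambda+\sigma|A|^2+R(\varkappa),\qquad |R(\varkappa)|\le k^{2l-n-\delta-2\gamma_0}+|\sigma||A|^2k^{-\gamma_0}\le c\,(1+|\sigma||A|^2)\,k^{-\gamma_0+\delta},
\]
where $2l-n-\delta-2\gamma_0=-\gamma_0+\delta$ was used. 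For the derivative, differentiating along the ray and using $\vec k=\vec t+2\pi j$,
\[
f'(\varkappa)=\big\langle \nabla_{\vec t}\lambda_{\tilde W}(\vec t)+\sigma|A|^2\nabla_{\vec t}(E_{\tilde W})_{jj},\,\vec\nu\big\rangle,
\]
and Corollary~\ref{chh}, namely \eqref{lambda-d*} and \eqref{mm++d*}, together with $\vec p_j(\vec t(\varkappa))=\varkappa\vec\nu$, gives
\[
f'(\varkappa)=\big\langle\nabla_{\vec t}p_j^{2l}(\vec t(\varkappa)),\vec\nu\big\rangle+O\big(k^{2l-1+\delta-\gamma_0}\big)+O\big(|\sigma||A|^2k^{n-1+\delta-\gamma_0}\big)=2l\,\varkappa^{2l-1}+o\big(k^{2l-1}\big),
\]
because $\delta<\gamma_0$, $n-1<2l-1$, and $|\sigma||A|^2<k^{\gamma_0-\delta}$. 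Hence for $k$ large $f'(\varkappa)\ge l\,k^{2l-1}>0$ for every $\varkappa\in I$, which already yields uniqueness of a solution of \eqref{2.70} in $I$.

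For existence and \eqref{varkappa}, evaluate at $\tilde k=(\lambda-\sigma|A|^2)^{1/2l}$, where the leading part of $f$ vanishes: $f(\tilde k)=R(\tilde k)$, so $|f(\tilde k)|\le c(1+|\sigma||A|^2)k^{-\gamma_0+\delta}$, while $|\tilde k-k|\le c|\sigma||A|^2k^{1-2l}$ places $\tilde k$ well inside $I$. Set $\rho:=4|f(\tilde k)|/(l\,k^{2l-1})\le c'\,(1+|\sigma||A|^2)\,k^{-2l+1-\gamma_0+\delta}$; since both $|\tilde k-k|$ and $\rho$ are $\ll k^{-2l+1+\gamma_0}$ for $k$ large, $[\tilde k-\rho,\tilde k+\rho]\subset I$, and the monotonicity with $f'\ge l\,k^{2l-1}$ gives $f(\tilde k-\rho)<0<f(\tilde k+\rho)$. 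The intermediate value theorem then produces $\varkappa\in(\tilde k-\rho,\tilde k+\rho)$ with $\lambda(\varkappa\vec\nu,A)=\lambda$ and $|\varkappa-\tilde k|\le\rho$, which is \eqref{varkappa}. The one genuinely delicate step is the derivative estimate: along the ray the self-consistent potential $\tilde W$ itself moves with $\varkappa$, so $f'(\varkappa)$ cannot be obtained by freezing $\tilde W$ — it has to be controlled through the $\vec t$-derivative bounds for the whole fixed point, which is exactly what Lemma~\ref{derivative-m} and Corollary~\ref{chh} supply. Granting those, the remaining argument is the nonlinear analogue of Lemma~\ref{L:2.12a} (cf.\ Lemma~2.10 of \cite{K97}).
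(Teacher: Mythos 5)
Your proposal is correct and follows essentially the same route as the paper: the paper's (very terse) proof likewise combines \eqref{lambda06} with the derivative bounds \eqref{mm++d*}, \eqref{lambda-d*} to get $\partial_\varkappa\lambda(\varkappa\vec\nu,A)=2l\varkappa^{2l-1}+O(\varkappa^{2l-1-\gamma_0})$, and then invokes Corollary \ref{est of uv 2l>n} and the monotonicity/intermediate-value argument of Lemma 2.10 in \cite{K97}, which you have simply written out in full. Your explicit identification of $k^{-2l+1+\gamma_0}=k^{-n+1-2\delta}$ and your remark that the $\vec t$-dependence of the self-consistent $\tilde W$ is exactly what Lemma \ref{derivative-m} controls are the right points to make.
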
  \begin{proof} Formulas \eqref{lambda06}, \eqref{mm++d*} and \eqref{lambda-d*} yield
\begin{equation}
\frac{\partial \lambda (\varkappa \vec \nu ,A)}{\partial \varkappa }
=2l\varkappa ^{2l-1}+ O(\varkappa ^{2l-1-\gamma _0}),
\label{pderiv}
\end{equation}
when $|\varkappa-k|<k^{-n+1-2\delta }$. 
The  lemma easily
follows from Corollary \ref{est of uv 2l>n} and \eqref{pderiv}. For details see Lemma 2.10 in
\cite{K97}.
\end{proof}


\begin{theorem} \label{iso} \begin{enumerate} \item For sufficiently
large $\lambda $, the set $\mathcal{D}(\lambda , A)$, defined by \eqref{isoset} is a distorted
circle with holes; it can be described by the formula
\begin{equation} \mathcal{D}(\lambda ,A)=\bigl\{\vec \varkappa \in
\R^n: \vec \varkappa =\vec \varkappa  (\lambda, A, \vec \nu), \ \vec \nu \in {\mathcal B} \},\label{May20} \end{equation} where $
\varkappa  (\lambda, A, \vec \nu)=\tilde k+h (\lambda, A, \vec \nu)$ and $h (\lambda, A, \vec \nu)$ obeys the
inequalities
\begin{equation}\label{2.75}
 |h|<c\left(1+|\sigma ||A|^2\right)k^{-2l+1-\gamma _0+\delta },\quad
 \left|\nabla _{\vec \nu} h\right| <
c\left(1+|\sigma ||A|^2\right)k^{-2l+n-\gamma _0+2\delta }.
\end{equation}

\item The measure of $\mathcal{B}(\lambda)\subset S_{n-1}$ satisfies the
estimate
\begin{equation}\label{theta1}
L\left(\mathcal{B}\right)=\omega _{n-1}(1+O(k^{-\delta })).
\end{equation}


\item The surface $\mathcal{D}(\lambda ,A)$ has the measure that is
asymptotically close to that of the whole sphere of the radius $k$ in the  sense that
\begin{equation}\label{2.77}
\bigl |\mathcal{D}(\lambda,A )\bigr|\underset{\lambda \rightarrow
\infty}{=}\omega _{n-1}k^{n-1}\bigl(1+O(k^{-\delta})\bigr),\quad \lambda =k^{2l}.
\end{equation}
\end{enumerate}
\end{theorem}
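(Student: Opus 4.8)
The plan is to transfer the structure of Lemma \ref{L:2.13a}, which handles the purely linear isoenergetic surface $\mathcal D(\lambda)$, to the nonlinear level set $\mathcal D(\lambda,A)$, using Lemma \ref{L:2.12} as the replacement for Lemma \ref{L:2.12a}. First I would fix a large $\lambda$, write $k=\lambda^{1/2l}$ and $\tilde k=(\lambda-\sigma|A|^2)^{1/2l}$, and note that by \eqref{A} (equivalently $|\sigma||A|^2<k^{\gamma_0-\delta}$) the hypotheses of Lemma \ref{L:2.12} hold. For each $\vec\nu\in\mathcal B(\lambda)$ Lemma \ref{L:2.12} provides a unique $\varkappa(\lambda,A,\vec\nu)\in I$ solving $\lambda(\varkappa\vec\nu,A)=\lambda$, so the set $\mathcal D(\lambda,A)$ of \eqref{isoset} is exactly $\{\varkappa(\lambda,A,\vec\nu)\vec\nu:\vec\nu\in\mathcal B(\lambda)\}$; setting $h=\varkappa-\tilde k$ gives \eqref{May20} and the first inequality in \eqref{2.75} is precisely \eqref{varkappa}.

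For the gradient bound on $h$ in \eqref{2.75}, I would differentiate the identity $\lambda(\varkappa(\lambda,A,\vec\nu)\vec\nu,A)=\lambda$ with respect to $\vec\nu$ (thinking of $\lambda(\vec k,A)=\lambda_{\tilde W}(\vec t)+\sigma|A|^2(E_{\tilde W})_{jj}$ as a function of $\vec k$, hence of $\vec t$ on the relevant region). By the chain rule, $\nabla_{\vec\nu}h = \nabla_{\vec\nu}\varkappa$ is controlled by the ratio of the tangential derivative $\nabla_{\vec t}\lambda$ perpendicular to the radial direction to the radial derivative $\partial\lambda/\partial\varkappa$. The radial derivative is $2l\varkappa^{2l-1}+O(\varkappa^{2l-1-\gamma_0})\sim k^{2l-1}$ by \eqref{pderiv}; for the tangential part, Corollary \ref{chh} gives $|\nabla_{\vec t}\lambda_{\tilde W}-p_j^{2l-2}\vec p_j|<2k^{2l-1+\delta-\gamma_0}$ and $\|\nabla_{\vec t}E_{\tilde W}\|_1<4k^{n-1+\delta-\gamma_0}$, so the component of $\nabla_{\vec k}\lambda$ orthogonal to $\vec k/|\vec k|$ is $O((1+|\sigma||A|^2)k^{2l-1+\delta-\gamma_0})$ (the main term $p_j^{2l-2}\vec p_j$ being radial and dropping out). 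Converting from $\vec t$-derivatives to $\vec\nu$-derivatives costs the usual Jacobian factor $k$ coming from $\vec k=k\vec\nu$, producing the stated bound $|\nabla_{\vec\nu}h|<c(1+|\sigma||A|^2)k^{-2l+n-\gamma_0+2\delta}$.

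Parts (2) and (3) are then essentially geometric consequences. The set $\mathcal B(\lambda)$ is the same as in the linear theory --- it depends only on $\chi_0(k,\delta)$ via \eqref{formulaB} --- so \eqref{theta1} is just \eqref{B}, i.e. part (2) of Lemma \ref{L:2.13a}. For part (3), I would write the surface measure of $\mathcal D(\lambda,A)$ as $\int_{\mathcal B(\lambda)}\varkappa^{n-1}\sqrt{1+|\nabla_{\vec\nu}(\ln\varkappa)|^2}\,d\vec\nu$ (the standard formula for the area of a radial graph over $S_{n-1}$), substitute $\varkappa=\tilde k+h$ with the bounds from \eqref{2.75}, and observe that $\tilde k^{n-1}=k^{n-1}(1+O(|\sigma||A|^2k^{-2l}))=k^{n-1}(1+O(k^{-\delta}))$ while the $h$ and $\nabla_{\vec\nu}h$ corrections are of lower order by a power $k^{-\delta'}$; combined with $|\mathcal B(\lambda)|=\omega_{n-1}+O(k^{-\delta})$ this yields \eqref{2.77}. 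Throughout, the main obstacle is the gradient estimate for $h$: one must be careful that the differentiation is legitimate on the $(k^{-n+1-2\delta})$-neighborhood where Theorem \ref{2.3} and Corollary \ref{chh} supply analytic control, that the large radial term $p_j^{2l-2}\vec p_j$ is genuinely parallel to $\vec k$ and therefore does not contaminate the tangential estimate, and that the bookkeeping of powers of $k$ (from the Jacobian, from $\gamma_0=2l-n-2\delta$, and from the $\delta$-losses in Corollary \ref{chh}) closes with the exponents claimed in \eqref{2.75}. For the remaining details one can follow the proof of Lemma 2.11 in \cite{K97}.
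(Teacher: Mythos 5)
Your proposal follows essentially the same route as the paper, whose entire proof consists of invoking the Implicit Function Theorem and deferring to Lemma 2.11 of \cite{K97}; you correctly identify the same ingredients (Lemma \ref{L:2.12} in place of Lemma \ref{L:2.12a}, the radial derivative \eqref{pderiv}, the tangential derivative bounds of Corollary \ref{chh}, the identification of $\mathcal B(\lambda)$ with the linear-case set, and the radial-graph area formula for part (3)). In fact your sketch supplies more detail than the paper does, so there is nothing to compare beyond noting agreement.
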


\begin{proof} The proof is based on  Implicit Function Theorem. It is completely analogous to
Lemma 2.11 in \cite{K97}.\end{proof}

\end{document}